\newif\ifconfver
\newtheorem{lemma}{Lemma}
\newtheorem{thm}{Theorem}
\newcommand{\limto}{\rightarrow}
\newcommand{\bz}{\mathbf 0}
\newcommand{\R}{\mathbb R}
\newcommand{\C}{\mathbb C}
\newcommand{\E}[1]{{\mathbb E}\left[ #1 \right]}
\newcommand{\tabincell}[2]{\begin{tabular}{@{}#1@{}}#2\end{tabular}}
\begin{document}
\title{Quartic Perturbation-based Outage-constrained Robust Design in Two-hop One-way \\Relay Networks}
\author{Sissi Xiaoxiao Wu,  Sherry Xue-Ying Ni, Jiaying Li, and Anthony Man-Cho So

\thanks{This work is supported by the National Natural Science Foundation of China under Grant 61701315; by Shenzhen Technology R$\&$D Fund JCYJ20170817101149906 and JCYJ20190808120415286; by Shenzhen University Launch Fund 2018018. S. X. Wu and J. Li are with the College of Electronics and Information Engineering, Shenzhen University, Shenzhen, China. S. X.-Y. Ni is with NM OPTIM Limited. A. M.-C. So is with the Department of Systems Engineering and Engineering Management, The Chinese University of Hong Kong, Shatin, N.T., Hong Kong S.A.R., China.  
	E-mails: 
	{\{xxwu.eesissi\}@szu.edu.cn, manchoso@se.cuhk.edu.hk, sherry.ni@nm.dev}.
}

}

\date{\today}
\maketitle

\begin{abstract}
In this work, we study a classic robust design problem in two-hop one-way relay system. We are particularly interested in the scenario where channel uncertainty exists in both the transmitter-to-relay and relay-to-receiver links. By considering the problem design that minimizes the average amplify-and-forward power budget at the relay side while satisfying SNR outage requirements, an outage-constrained robust design problem involving quartic perturbations is formulated to guarantee the robustness during transmission. This problem is in general difficult as it involves constraints on the tail probability of a high-order polynomial. Herein, we resort to moment inequality and Bernstein-type inequality to tackle this problem, which provide convex restrictions, or safe approximations, of the original design. We also analyze the relative tightness of the two safe approximations for a quadratic perturbation-based outage constrained problem. Our analysis shows that the Bernstein-type inequality approach is less conservative than the moment inequality approach when the outage rate is within some prescribed regime. To our best knowledge, this is the first provable tightness result for these two safe approximations. Our numerical simulations verify the superiority of the robust design and corroborate the tightness results.  
\end{abstract}

\section{Introduction}
In recent decades, multi-hop relay technology has been widely used in long-distance wireless communication systems to expand the coverage of communication. For example, when devices are far away from each other, such a technology can improve the “quality-of-service” (QoS) between transmitters and receivers \cite{ieee2009}. Moreover, multi-hop relay technology can be applied to other advanced communication systems, such as device-to-device (D2D) communication for which the equipment can act as an instant relay node~\cite{zhang2015intra,tian2019analysis, gong2018backscatter}, the millimeter wave communication which can overcome the serious attenuation phenomenon at high frequency~\cite{roh2014millimeter}, and cognitive radio (CR) networks in which it could improve the coverage of cognitive network and the channel capacity of the system~\cite{cai2016energy,zhang2017optimized}. In the multi-hop relay transmission, a classic task is to design the amplify-and-forward (AF) weights that guide the relays to adjust their antennas towards the receiver. This design task is conditioned on that the relay system acquires the channel state information (CSI) from both the transmitter and the receiver. However, in practice, due to estimation error, quantization error, or limited feedback, the available CSIs at the relays are usually imperfect. It is well known that CSI uncertainties might lead to serious performance degradation during transmission. This motivates us to study robust designs for the multi-hop relay system.

In this paper, we consider the robust design problem in the context of two-hop one-way relay beamforming, which is generally more involved than its non-robust counterpart \cite{fazeli2009multiple}. This robust design problem is considered in either a worst-case setting or a chance-constrained setting. In both settings, it is usually assumed that CSI errors are only present in either the transmitter-to-relay link or the relay-to-receiver link.
Under this assumption, the so-called S-lemma can be applied to the worst-case setting to turn a semi-infinite program (SIP) \cite{stein2012solve, lasserre2015tractable} into a tractable semidefinite program (SDP) \cite{zheng2009robust, RobustrelaybeamformingCui12}; while in the chance-constrained setting, the so-called sphere-bounding or Bernstein-type inequality \cite{wang2014outage,ni2019outage,ni2018outage,zhou2020robust,yazar2020power,keskin2018optimal,li2018robust} can be applied to find safe approximations of the original chance constraints. 
So far, there are very few works considering a reliable robust design for the case where both the transmitter-to-relay and relay-to-receiver links have errors, as this case involves quartically perturbed constraints and is generally difficult. Our previous work \cite{wu2016polynomial} solved a quartically-constrained robust design problem in a worst-case setting. Our goal in this paper is to fill the gap in the chance-constrained setting. In this work, we target the problem design that minimizes the average AF power at the relays while satisfying the receivers’ signal-to-noise ratio (SNR) constraints with a small outage probability, given that both the transmitter-to-relay  and relay-to-receiver links are subject to Gaussian errors. To tackle the resulting outage constraints with quartic polynomials of complex Gaussian random variables \cite{ma2014robust}, there are two possible approaches: 1) Reformulate the high-order chance constraints into safe tractable approximations and employ the positive semidefinite relaxation (SDR) technique \cite{luo2010semidefinite} to solve the approximated problem;  2) simply ignore the higher-order perturbation terms and deal only with quadratic chance constraints. For the former, we can resort to a suitable moment inequality to develop safe approximations of the quartically perturbed outage constraints. For the latter, we can apply both the moment inequality and Bernstein-type inequality to develop safe approximations of the approximating quadratically perturbed outage constraints. The resulting approximations can then be 
solved by the SDR technique, and a sub-optimal AF beamforming vector can be extracted from the optimal SDR solution using a Gaussian randomization procedure.

In the literature, the common robust design only gives rise to a quadratically perturbed chance constraint. For example, \cite{wang2014outage} first proposed and summarized three methods to solve chance constraints involving quadratic forms, namely sphere bounding, Bernstein-type inequality, and decomposition-based large deviation inequality (LDI). The works \cite{yan2019outage,zhang2019robust} applied Taylor's expansion to approximate the chance constraint of interest by one with quadratic perturbations and then tackled it using the LDI approximation. In \cite{ni2019outage,ni2018outage,zhou2020robust}, the authors used Bernstein-type inequality to convert a quadratically perturbed chance constraint into a deterministic form in the robust design of CR networks. The sphere bounding method was used to propose a safe tractable approximation of the quadratic chance constraint in \cite{yazar2020power,keskin2018optimal}. Furthermore, \cite{li2018robust} used the S-procedure and Bernstein-type inequality to tackle quadratic chance constraints and compared the performance of these two methods. Even for works that originally aim at tackling robust designs with high-order perturbations, the traditional way is to ignore the higher-order terms and apply standard techniques from robust optimization to simplify the constraints~\cite{chalise2009mimo,tao2012robust,aziz2012robust,ponukumati2013robust}. There are also other works that tackle quartic constraint but not from a probabilistic perspective. For example,  \cite{wu2016polynomial}  employed the SDR technique and tools from polynomial optimization to construct safe approximations of such constrains; \cite{shi2019spectrally} introduced several auxiliary variables to convert the quartic ISL constraint and the PAPR constraint into several quadratic constraints and proposed an alternating direction method of multipliers-based solution to handle them;  \cite{jin2018hybrid} showed that the quartic constraint associated with a certain hybrid precoding problem is automatically satisfied and hence can be removed; \cite{wen2017joint} introduced a slack variable to replace the higher-order terms in the SINR QoS constraint, thus converting it into a linear matrix inequality. Generally speaking, finding a good solution to robust designs involving quartic perturbations is a very difficult problem, especially from a probabilistic perspective. 
Recently, \cite{ma2014robust} proposed a safe tractable approximation of quartically perturbed chance constraints using moment inequalities for Gaussian polynomials and the SDR technique. However, no theoretical analysis was provided, and the relative tightness of the Bernstein-type inequality approach and the moment inequality approach remains unknown.


Our contribution in this work is fourfold. First, we introduce the two-hop relay robust design problem with a quartically perturbed chance constraint, which is seldom studied in prior work. Second, we apply
the fourth-order moment inequality to obtain a safe approximation of the said chance constraint. Third, we apply the second-order moment inequality and the Bernstein-type inequality to provide a safe approximation of the quadratically perturbed chance constraint obtained by ignoring higher-order perturbation terms. In addition, we provide an analytical bound to prove the relative tightness of different restrictions. Our numerical results show that the proposed approximation approaches are more reliable than the non-robust counterpart. Also, the results suggest that
by dealing with all the perturbation terms in the chance constraint instead of keeping only the lower-order ones, the resulting design is more robust against perturbations that do not match the prior distributional information. Lastly, our comparison between the second-order moment inequality-based approach and the Bernstein-type inequality-based approach corroborate our relative tightness result.

The rest of the paper is organized as follows. In Section \ref{sec:2}, we provide the system model and formulate the robust design problem for the two-hop one-way relay system.
Section \ref{sec:3} introduces the moment inequality-based approach, which can provide safe approximations for both the quartically perturbed chance constraint and the quadratically perturbed chance constraint. In Section \ref{sec:4}, the Bernstein-type inequality-based approach is proposed to find a safe approximation of the quadratically perturbed chance constraint. Moreover, in Section \ref{sec:5}, we establish theoretically the relative tightness of the moment inequality-based and Bernstein-type inequality-based approaches.
Simulation results are presented in Section \ref{sec:sim} and
we conclude our work  in Section \ref{sec:con}.

\section{System Model and Problem Formulation} \label{sec:2}
In this work, we consider a classic scenario setting for a relay network consisting of one transmit-receiver pair and there are $L$ relays between them to assist the transmission. We assume that no direct link is  involved in this setting and both the transmitter and receiver are equipped with a single antenna. Then, the information is transmitted from the transmitter to receiver through two types of links.
One is the \emph{transmitter-to-relay links}, via which the transmitter sends common information to the relays. In this context,
the receive model is given by
\begin{equation}\label{rt}
{\bm r}(t) = {\bm f}s(t) + {\bm n}(t),
\end{equation}
where $s(t)$ is the common information with $\mathbb{E}[|s(t)|^2]=P_t$ and $P_t$ is the transmit power at the transmitter; ${\bm f} \in \mathbb{C}^{L}$ is the channel from the transmitter to the relays; ${\bm n}(t)=[n_1(t),\ldots,n_\ell(t),\ldots ,n_L(t)]^T$ and $n_\ell(t)$ is the white noise at relay-$\ell$ with variance $\sigma_\ell^2$.
The other is the \emph{relay-to-receiver links}, via which relays amplify and forward the received signal to the receiver. In this paper, we target at the relay beamforming scheme, in which
the AF process at the relay side is given by
\begin{equation} \label{eq:xt}
{\bm x}(t) = {\rm Diag}({\bm w}) {\bm r}(t),\footnote{The operator ${\rm Diag}({\bm v})$ will output a diagonal matrix with the elements of the vector ${\bm v}$ on the diagonal.}
\end{equation}
where 
${\bm w} = [w_1,\ldots,w_\ell, \ldots,w_L]^T$ and $w_\ell$ is the AF weight at relay $\ell$.
Under this model, the received signal can be expressed as
\begin{align}\label{eq:yt}
y(t) = &{\bm g}^H{\bm x}(t) + {v}(t),
\end{align}
where ${\bm g} \in \mathbb{C}^L$ is the channel from the relays to the receiver; ${v}(t)$ is the white noise at the receiver with variance $\sigma_{{v}}^2$. Then, the SNR at the receiver can be expressed as
\[
{\rm SNR} = \frac{{\bm w}^H P_t({\bm f}\odot {\bm g}^*)({\bm f}\odot {\bm g}^*)^H{\bm w}}{{\bm w}^H {\rm Diag} ([|g^1|^2\sigma_1^2, |g^2|^2\sigma_2^2,\ldots,|g^L|^2\sigma_L^2]){\bm w}+\sigma_{{v}}^2}.
\]

\begin{figure}[t!]
	\centering
	\includegraphics[width = 8cm]{./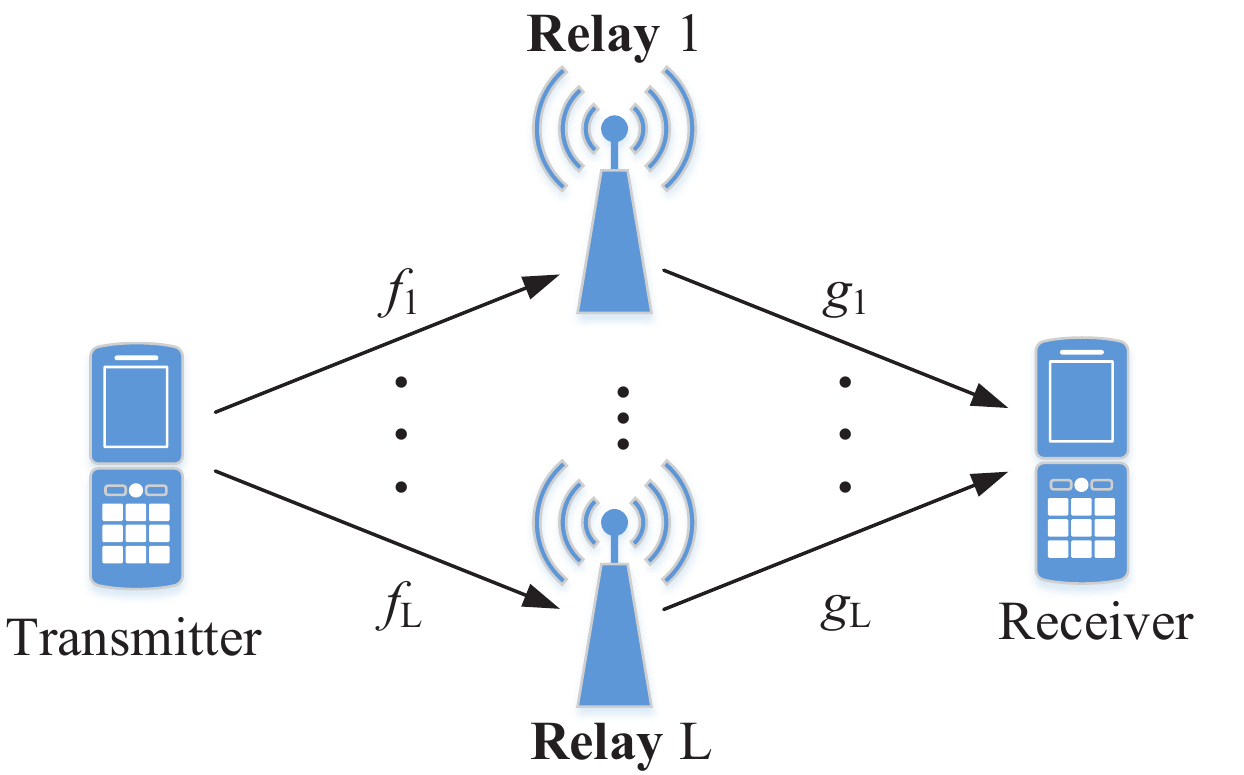}
	\caption{The two-hop one-way relay network. }
	\label{Fig:1}
\end{figure}

Under \eqref{eq:xt} and \eqref{eq:yt} we may explicitly express the power at the relays as ${\rm \mathbb{E}}[{\bm w}^H {\bm D}{\bm w}]$ and ${\rm SNR}$ as
\begin{equation*}\label{P_SNR}
\begin{array}{ll}
& \quad \displaystyle {\rm SNR}= \frac{{\bm w}^H {{\bm A}}{\bm w}}{{\bm w}^H {{\bm C}}{\bm w}+1},
\end{array}
\end{equation*}
where
\begin{align}\label{ACD}
{{\bm A}} = &P_t(({\bm f}{\bm f}^H)\odot (({\bm g}^*)({\bm g}^*)^H))/\sigma_{{v}}^2, \\ \label{ck}
{{\bm C}} = &{\bm \Sigma}\odot ({\bm g}{\bm g}^H)/\sigma_{{v}}^2, \quad 
 \\\label{d}
{\bm D} = &P_t{\bm I}\odot({\bm f}{\bm f}^H) + {\bm \Sigma},
\end{align}
$\sigma_{{v}}^2$ is the noise power at the destination,  ${\bm \Sigma}={\rm Diag}([\sigma_1^2,...,\sigma_L^2])$, and ${\bm f}$, ${\bm g}$ represent the actual transmiter-to-relays channel and relays-to-receiver channel, respectively.  In the general case where both links are imperfect, we write
\[
{\bm f} =  {\bar {\bm f}} + \Delta{\bm f}, \quad \quad {\bm g} = {\bar {\bm g}} + \Delta{\bm g},
\]
where ${\bar {\bm f}}$ and ${\bar {\bm g}}$ are the estimated CSI;  $\Delta{\bm f}$ and $\Delta{\bm g}$ are the corresponding stochastic CSI errors that respectively follow the distributions $\Delta{\bm f}\sim\mathcal{CN}(0,{{\bm E}}_f)$, ${{\bm E}}_f\succ 0$ and $\Delta{\bm g}\sim\mathcal{CN}(0,{{\bm E}}_g)$, ${{\bm E}}_f\succ 0$.  
Here we adopt a Gaussian channel error model; $i.e.$, ${{\bm E}}_f=\epsilon^2{\bm I}$, ${{\bm E}}_g=\eta^2{\bm I}$ with $\epsilon,\eta>0$. Equivalently, we may write
\[
\Delta{\bm f}=\epsilon{\bm x}, \quad \quad \Delta{\bm g}=\eta{\bm y},
\]
where ${\bm x}$ and ${\bm y}$ are standard complex Gaussian vectors and $\epsilon, \eta$  are known scalars to bound the error magnitudes~\cite{fazeli2009multiple, chalise2009mimo}.

We are motivated to design the AF weight vector ${\bm w}$ so that the average transmit power is minimized while the receiver's SNR outage constraint is satisfied. Specifically, we consider the following problem:
\begin{equation}\label{main0}
\begin{array}{ll}
\displaystyle\min_{{\bm w} \in \mathbb{C}^L} &  \quad \displaystyle {\rm \mathbb{E}}[{\bm w}^H {\bm D}{\bm w}]\\
\text{subject to} \quad &\quad \displaystyle {\rm Pr}\{{\rm SNR}\le \gamma\} \le \rho, \\
\end{array}
\end{equation}
where $\gamma$ is the target SNR threshold and $\rho$ is a prescribed outage rate we want to guarantee during the transmission. 
To further tackle Problem \eqref{main0}, a typical relaxation is

\begin{equation}\label{main1}
\begin{array}{ll}
\displaystyle\min_{{\bm W} \in \mathbb{C}^{L \times L}} &  \quad \displaystyle {\rm \mathbb{E}}[{\bm D} \cdot {\bm W}]\\
\text{subject to} \quad &\quad \displaystyle {\rm Pr}\{ Q({\bm W},{\bm x},{\bm y})\ge 0\} \le \rho, \\
\quad &\quad  {\bm W} \succeq 0,
\end{array}
\end{equation}
where 
\begin{align}\label{SNR_exact}
&Q({\bm W},{\bm x},{\bm y}) \\\notag
=&\sigma_{{v}}^2+{\bm W}\cdot \big( {\bm \Sigma}\odot ({\bm g}{\bm g}^H) - \frac{P_t}{\gamma}(({\bm f}{\bm f}^H)\odot (({\bm g}^*)({\bm g}^*)^H))\big).
\end{align}
Problem \eqref{main1} is still difficult as $Q({\bm W},{\bm x},{\bm y})$ in the outage constraint involves high-order perturbation terms. Our main task in the sequel is to discuss how to deal with this challenge.

\section{Moment Inequality-Based Approach} \label{sec:3}
In this section, we review and further develop the moment inequality approach in \cite{ma2014robust} to construct safe tractable approximations of chance constraints with quartic perturbations.

\subsection{The Fourth-order Moment Inequality Method}
Key to the development of safe tractable approximations of
quartically perturbed chance constraints is the following moment
inequality for quartic polynomials in complex Gaussian
random variables.
\begin{thm}
	 Let $\xi_1,\ldots,\xi_m$ be independent standard real Gaussian random variables.  Consider the function $f:\C^n \times \R^m \limto \R$:
\begin{align}
	&f({\bm x},{\bm \xi}) =-a_0({\bm x}) + \sum_{1\le i \le m} \xi_i a_i({\bm x}) + \sum_{1\le j_1,j_2 \le m} \xi_{j_1}\xi_{j_2} a_{j_1j_2}({\bm x}) \notag\\
	&+\sum_{1\le k_1,k_2,k_3 \le m} \xi_{k_1}\xi_{k_2}\xi_{k_3}a_{k_1k_2k_3}({\bm x}) \\\notag
	&+ \sum_{1\le \ell_1,\ell_2,\ell_3,\ell_4 \le m} \xi_{\ell_1}\xi_{\ell_2}\xi_{\ell_3}\xi_{\ell_4} a_{\ell_1\ell_2\ell_3\ell_4}({\bm x}),
\end{align}
where $a_0,a_i,a_{j_1j_2},a_{k_1 k_2 k_3},a_{\ell_1 \ell_2 \ell_3 \ell_4}$ are affine functions of $ \bm x$.  Note that we allow the decision vector ${\bm x}$ to take complex values.  However, we assume that the value $f({\bm x},{\bm \xi})$ is real for any ${\bm x} \in \C^n$ and ${\bm \xi} \in \R^m$.  Consider the chance constraint
\begin{equation}\label{chance_Q}
\Pr (f({\bm x},{\bm \xi})\ge 0)\le \rho,
\end{equation}
where $\rho > 0$ is given. The following hold:
\begin{itemize}
\item[(a)] For each ${\bm x} \in {{\mathbb C}^{n}}$ and ${\bm \xi} \in {{\mathbb R}^{m}}$, let  
\[
\bar{f}({\bm x},{\bm \xi})\triangleq f({\bm x},{\bm \xi})+{{a}_{0}}({\bm x}).
\] 
Then, the function $\bm{x} \mapsto \bar{f}( \bm{x}, \bm{\xi})^2$ is quadratic in $\bm{x}$ and can be written in the form
\begin{align}
\bar{f}({\bm x},{\bm \xi})^{2}={\bm v}^H({\bm x}){\bm U}({\bm \xi}){\bm v}({\bm x})
\end{align}
for some  ${\bm U}({\bm \xi}) \succeq  {\bm 0}$, where ${\bm U}({\bm \xi})$ is a Hermitian positive semidefinite matrix whose rows and columns are labeled by the set of indices
$$ \mathscr{S} = \{0,\underbrace{i,\ldots}_{1\le i \le m},\underbrace{j_1j_2,\ldots}_{1\le j_1,j_2 \le m},\underbrace{k_1k_2k_3,\ldots}_{1\le k_1,k_2,k_3 \le m},\underbrace{\ell_1\ell_2\ell_3\ell_4,\ldots}_{1\le \ell_1,\ell_2,\ell_3,\ell_4 \le m}\} $$
arranged in lexicographic order and ${\bm v}:\C^n \limto \C^{|\mathscr{S}|}$ is an affine function. In particular, ${\bm v}({\bm x})$ is a vector whose $s$-th component is $a_s({\bm x})$, where $s \in \mathscr{S}$.

\item[(b)] Let ${\bm U} \triangleq {\mathbb E}[{\bm U}({\bm {\xi}})]\succeq {\bm 0}$ and
\begin{equation}\label{c(epsilon)}
c(\rho)\triangleq \left\{
\begin{array}{ll}
{{({q}(\rho)-1)}^{2}}\exp (\frac{2{q}(\rho)}{{ q}(\rho)-1})   &\ {\rm if}\ {q}(\rho)>2;\\
1/{\sqrt{\rho}}\ &\ {\rm if}\ {q}(\rho)=2,
\end{array}
\right.
\end{equation}
where
\begin{equation}\label{q(epsilon)}
{\mathop{q}}\,(\rho)\triangleq \left\{
\begin{array}{ll}
\frac{-\ln \rho+\sqrt{{{(\ln \rho)}^{2}}-8\ln \rho}}{4}   &\ {\rm if}\ \rho \in (0,\exp (-8)];\\
2\ &\ {\rm otherwise},\
\end{array}
\right.
\end{equation}
Then, the second-order cone constraint
\begin{equation}\label{SOC constrain}
{{a}_{0}}({\bm x})\ge c(\rho)||{{\bm U}^{1/2}}{\bm x}||
\end{equation}
serves as a safe tractable approximation of the chance
constraint in \eqref{chance_Q}.
\end{itemize}
\label{thm:1}
\end{thm}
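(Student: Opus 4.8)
For part (a), I would first note that stripping off the constant term $-a_0(\bm{x})$ makes $\bar f(\bm{x},\bm{\xi})$ \emph{linear} in the coefficient vector $\bm{v}(\bm{x})=(a_s(\bm{x}))_{s\in\mathscr{S}}$. Collecting the monomials $1,\xi_i,\xi_{j_1}\xi_{j_2},\ldots$ indexed by $\mathscr{S}$ into a single real vector $\bm{m}(\bm{\xi})$, one has $\bar f(\bm{x},\bm{\xi})=\bm{m}(\bm{\xi})^H\bm{v}(\bm{x})$. Because $\bar f$ is assumed real-valued, squaring gives $\bar f(\bm{x},\bm{\xi})^2=\bm{v}(\bm{x})^H\bigl(\bm{m}(\bm{\xi})\bm{m}(\bm{\xi})^H\bigr)\bm{v}(\bm{x})$, so one simply sets $\bm{U}(\bm{\xi})=\bm{m}(\bm{\xi})\bm{m}(\bm{\xi})^H$, a rank-one Hermitian PSD matrix. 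This is exactly the representation claimed in (a); taking expectations then yields $\mathbb{E}[\bar f^2]=\bm{v}(\bm{x})^H\bm{U}\bm{v}(\bm{x})=\|\bm{U}^{1/2}\bm{v}(\bm{x})\|^2$ with $\bm{U}=\mathbb{E}[\bm{U}(\bm{\xi})]\succeq\bm{0}$.

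For part (b), the plan is a chain of three reductions in which a free moment order $q$ is optimized only at the very end. I would first rewrite the chance event as $\{f\ge0\}=\{\bar f\ge a_0\}$. If the SOC constraint holds, its right-hand side is nonnegative, so $a_0(\bm{x})\ge0$ and $\{\bar f\ge a_0\}\subseteq\{|\bar f|\ge a_0\}$; hence, by Markov's inequality, for any moment order $q\ge2$,
\[
\Pr(f\ge0)\le\frac{\mathbb{E}\bigl[\,|\bar f|^q\,\bigr]}{a_0(\bm{x})^q}.
\]
The crux is then a Gaussian moment (hypercontractivity) inequality for the degree-four polynomial $\bar f$, namely $\mathbb{E}[|\bar f|^q]^{1/q}\le(q-1)^{2}\,\mathbb{E}[\bar f^2]^{1/2}$; combined with part (a) this gives $\mathbb{E}[|\bar f|^q]\le(q-1)^{2q}\|\bm{U}^{1/2}\bm{v}(\bm{x})\|^{q}$. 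Substituting and invoking the SOC bound $a_0\ge c(\rho)\|\bm{U}^{1/2}\bm{v}(\bm{x})\|$ collapses the estimate to the $\bm{x}$-free form $\Pr(f\ge0)\le\bigl((q-1)^2/c(\rho)\bigr)^q$.

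It remains to choose $c(\rho)$ and $q$ so that this bound is at most $\rho$, i.e. to minimize $c=(q-1)^2\rho^{-1/q}$ over $q\ge2$. Writing $t=-\ln\rho$, the stationarity condition reads $2q^2=t(q-1)$; its larger root is real precisely when $t\ge8$, i.e. $\rho\le e^{-8}$, and this is the stated $q(\rho)$. Substituting $\rho^{-1/q}=\exp(t/q)=\exp(2q/(q-1))$ then produces the stated $c(\rho)=(q-1)^2\exp(2q/(q-1))$. For $\rho>e^{-8}$ the objective is increasing on $q\ge2$, so the optimum sits at the boundary $q=2$, giving $c(\rho)=1/\sqrt{\rho}$, which explains the two-case definition. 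The main obstacle is the hypercontractivity estimate with the sharp constant $(q-1)^2$ for degree-four Gaussian polynomials: this is the substantive analytic input (supplied by the moment-inequality machinery of the cited prior work), whereas the one-variable optimization and the PSD bookkeeping of part (a) are routine by comparison.
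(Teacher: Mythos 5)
Your proposal is correct and follows essentially the same route as the paper's proof: part (a) via the rank-one outer-product representation ${\bm U}({\bm \xi})={\bm m}({\bm \xi}){\bm m}({\bm \xi})^H$, and part (b) via Markov's inequality combined with the Gaussian moment (hypercontractivity) bound $\E{|\bar f|^q}^{1/q}\le (q-1)^2\E{\bar f^2}^{1/2}$, which is exactly the paper's Lemma cited from Janson. The only difference is presentational: you derive $q(\rho)$ and $c(\rho)$ as the minimizer of $(q-1)^2\rho^{-1/q}$ over $q\ge 2$, whereas the paper simply takes these expressions as given and verifies that the resulting tail bound equals $\rho$.
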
	
This theorem was first proposed in \cite{ma2014robust}  for robust beamforming design in a two-way relay network. Therein, a partial proof was sketched while important details were omitted due to the page limit. In this work, we will give a complete proof in Appendix \ref{appendix:1}. Now, let us apply Theorem \ref{thm:1} to our problem. We define
\begin{align*}
&a_0({\bm W})\\
=&-\sigma_{{v,d}}^2 - {\bm W}\cdot \big( {\bm \Sigma}\odot (\bar{\bm g}\bar{\bm g}^H) - \frac{P_t}{\gamma}((\bar{\bm f}\bar{\bm f}^H)\odot (({\bar{\bm g}}^*)({\bar{\bm g}}^*)^H))\big).
\end{align*}
Then, we have
\begin{eqnarray*}
\bar{f}({\bm W},{\bm x},{\bm y})&=& f({\bm W},{\bm x},{\bm y})+a_0({\bm W})\\
&=& {\bm W}\cdot {\bm M}({\bm x}, {\bm y}) \\
&=& {\rm vec}({\bm W})^H{\rm vec}({\bm M}({\bm x}, {\bm y}))
\end{eqnarray*}
and
$${\rm \mathbb{E}}\big[ \bar{f}({\bm W},{\bm x},{\bm y})^2  \big]={\rm vec}({\bm W})^H{\bm U}{\rm vec}({\bm W}),$$
where
$${\bm U}={\rm \mathbb{E}}\big[{\rm vec}({\bm M}({\bm x}, {\bm y})){\rm vec}({\bm M}({\bm x}, {\bm y}))^H\big].$$
The explicit forms of ${\bm M}({\bm x}, {\bm y})$ and ${\bm U}$ will be given in the sequel. Armed with Theorem \ref{thm:1}, the following second-order cone constraint serves as a safe approximation of the chance constraint ${\rm Pr} \{ Q( \bm{W}, \bm{x}, \bm{y}) \ge 0 \} \le \rho$ in \eqref{main1}:
$$a_0({\bm W}) \ge c_1(\rho) \| {\bm U}^{1/2} {\rm vec}({\bm W})\|.$$
This yields the following safe approximation of Problem \eqref{main1}:
\begin{equation}\label{main2}
\begin{array}{ll}
\displaystyle\min_{{\bm W} \in \mathbb{C}^{L \times L}} &  \quad \displaystyle (P_t{\bm I}\odot(\bar{\bm f}\bar{\bm f}^H+\epsilon^2{\bm I}) + {\bm \Sigma})\cdot {\bm W}\\
\text{subject to} \quad &\quad \displaystyle a_0({\bm W}) \ge c_1(\rho) \| {\bm U}^{1/2} {\rm vec}({\bm W})\|,\\
& \quad {\bm W} \succeq 0.
\end{array}
\end{equation}
Note that Problem \eqref{main2} can be readily handled by applying the SDR technique and off-the-shelf convex solvers. 
\ifconfver
\begin{table*}[th]
	\else
	\begin{table}[h]
		\fi
		\caption{Explicit Expressions in Fourth-order Moment Inequality-based Approach}
		\label{tab:summary}
			\linespread{1.25} \rm \footnotesize 
			\begin{tabular}{c||l}
			\hline\hline
				${\bm M}({\bm x}, {\bm y})_{(i,i)}$ &
				\begin{minipage}{0.8\textwidth}
					\begin{center}
\begin{eqnarray*}\label{{Mii}}
	&&{\bm M}({\bm x}, {\bm y})_{(i,i)}= \eta\sigma_i^2({y_i}{\bar g}_i^*+{\bar g}_i{y_i}^*) + \eta^2\sigma_i^2{y_i}{y_i}^* - \frac{P_t}{\gamma}\big[
	\big(\eta{\bar f}_i{\bar f}_i^*({y_i}^*{\bar g}_i +
	{\bar g}_i^*{y_i}) \\
	&&+\epsilon{\bar g}_i{\bar g}_i^*({x_i}{\bar f}_i^*+{\bar f}_i{x_i}^*)\big)
	+\big(\eta^2{\bar f}_i{\bar f}_i^*{y_i}{y_i}^* + \epsilon^2 {\bar g}_i{\bar g}_i^*{x_i}{x_i}^* + \epsilon\eta({x_i}{\bar f}_i^*+{\bar f}_i{x_i}^*)({y_i}^*{\bar g}_i+{\bar g}_i^*{y_i})\big) \\
	&&+ \big(\epsilon\eta^2({x_i}{\bar f}_i^*+{\bar f}_i{x_i}^*){y_i}{y_i}^*+\eta\epsilon^2{x_i}{x_i}^*({y_i}^*{\bar g}_i+{\bar g}_i^*{y_i})\big) 
	+ \epsilon^2\eta^2{x_i}{x_i}^*{y_i}{y_i}^* 
	\big]\\
\end{eqnarray*}
					\end{center}
				\end{minipage}
			 \\ \hline\hline
${\bm M}({\bm x}, {\bm y})_{(k,\ell)}$ &
			 \begin{minipage}{0.8\textwidth}
			 	\begin{center}
			 		\begin{eqnarray*}\label{{Mij}}
			 			&&{\bm M}({\bm x}, {\bm y})_{(k,\ell)}=-\frac{P_t}{\gamma}\big[
			 			\big(\eta{\bar f}_k{\bar f}_{\ell}^*({y_k}^*{\bar g}_{\ell} +{\bar g}_k^*{y_{\ell}}) +\epsilon{\bar g}_k^*{\bar g}_{\ell}({x_k}{\bar f}_{\ell}^*+{\bar f}_k{x_{\ell}}^*)\big)\\
			 			&&+\big(\eta^2{{\bar f}_k}{{\bar f}_{\ell}}^*{y_k}^*{y_{\ell}} + \epsilon^2 {\bar g}_k^*{\bar g}_{\ell}{x_k}{x_{\ell}}^* + \epsilon\eta({x_k}{\bar f}_{\ell}^*+{\bar f}_k{x_{\ell}}^*)({y_k}^*{\bar g}_{\ell}+{\bar g}_k^*{y_{\ell}})\big) \\
			 			&&+ \big(\epsilon\eta^2({x_k}{\bar f}_{\ell}^*+{\bar f}_k{x_{\ell}}^*){y_k}^*{y_{\ell}}+\eta\epsilon^2{x_k}{x_{\ell}}^*({y_k}^*{\bar g}_{\ell}+{\bar g}_k^*{y_{\ell}})\big) 
			 			+ \epsilon^2\eta^2{x_k}{x_{\ell}}^*{y_k}^*{y_{\ell}} 
			 			\big], \quad 1 \le i\le L, 1 \le k \neq \ell \le L\\
			 		\end{eqnarray*}
			 	\end{center}
			 \end{minipage}
			 \\ \hline\hline
$ {\bm U}_{(i,j),(k,\ell)}$ &
\begin{minipage}{0.8\textwidth}
	\begin{center}			 	 
			 \begin{equation*}
			 {\bm U}_{(i,j),(k,\ell)}=\left\{
			 \begin{array}{rl}
			 \Sigma_{m=1}^9 C_m(i)C_m(i)^*    &\ {\rm if}~ 1 \le i=j=k=\ell \le L;\\
			 C_9(i)C_9(k)^*&\ {\rm if}\ i=j,\  k=\ell, \  i \neq k;\\
			 C_1(i)D_1(i,\ell)^* + C_3(i)D_3(i,\ell)^* + C_5(i)D_5(i,\ell)^* &\ {\rm if}\ i=j, \ k\neq\ell,\ i=k;\\
			 C_2(i)D_2(k,i)^* + C_4(i)D_4(k,i)^* + C_8(i)D_8(k,i)^*  &\ {\rm if}\ i=j, \ k\neq\ell,\ i=\ell;\\
			 D_1(i,j)C_1(i)^*+D_3(i,j)C_3(i)^* + D_5(i,j)C_5(i)^* &\ {\rm if}\ i\neq j, \ k=\ell,\ i=k;\\
			 D_2(i,j)C_2(j)^*+D_4(i,j)C_4(j)^* + D_8(i,j)C_8(j)^*&\ {\rm if}\ i\neq j, \ k=\ell,\ j=k;\\
			 \Sigma_{n=1}^{15} D_n(i,j)D_n(i,j)^*    &\ {\rm if}~i\neq j,\ k\neq \ell,\ i=k,\ j=\ell;\\
			 D_1(k,j)D_1(k,\ell)^*+D_3(k,j)D_3(k,\ell)^* + D_5(k,j)D_5(k,\ell)^* &\ {\rm if}~ i\neq j,\ k\neq \ell,\ i=k,\ j\neq\ell;\\
			 D_2(i,\ell)D_2(k,\ell)^*+D_4(i,\ell)D_4(k,\ell)^* + D_8(i,\ell)D_8(k,\ell)^* &\ {\rm if}~ i\neq j,\ k\neq \ell,\ i\neq k,\ j=\ell;\\
			 0\ &\ {\rm otherwise.}\
			 \end{array}
			 \right. 
			 \end{equation*}
		\end{center}
		 \end{minipage}
			 \\ \hline\hline
			\end{tabular}
		\ifconfver
	\end{table*}
	\else
\end{table}
\fi

A straightforward but tedious calculation yields ${\bm M}({\bm x}, {\bm y})_{(i,j)}$ in Table~\ref{tab:summary}. To derive an explicit expression for ${\bm U}$, we treat $(i,j)$ as an index of any vectorized matrix in the sequel. In other words, for a matrix ${\bm R}$, we denote
$${\rm vec}({\bm R})_{(i,j)}={\bm R}_{i,j}.$$
Recall that ${\bm U}={\rm \mathbb{E}}\big[{\rm vec}({\bm M}({\bm x}, {\bm y})){\rm vec}({\bm M}({\bm x}, {\bm y}))^H\big].$
Then,
\begin{eqnarray*}\label{U}
{\bm U}_{(i,j),(k,\ell)}&=&{\rm \mathbb{E}}\big[{\rm vec}({\bm M}({\bm x}, {\bm y}))_{(i,j)}{\rm vec}({\bm M}({\bm x}, {\bm y}))^*_{(k,\ell)}\big]\\
&=&{\rm \mathbb{E}}\big[{\bm M}({\bm x}, {\bm y})_{i,j}{\bm M}({\bm x}, {\bm y})^*_{k,\ell}\big].
\end{eqnarray*}
The entries of $\bm U$ are given as ${\bm U}_{(i,j),(k,\ell)}$ in Table \ref{tab:summary} with
\begin{equation*}
\left\{
\begin{array}{rl}
&C_1(i)=\eta\sigma_i^2{\bar g}_i - \frac{P_t}{\gamma}(\eta{\bar f}_i{\bar f}_i^*{\bar g}_i+\eta\epsilon^2{\bar g}_i) ;\\
&C_2(i)=\eta\sigma_i^2{\bar g}_i^* - \frac{P_t}{\gamma}(\eta{\bar f}_i{\bar f}_i^*{\bar g}_i^*+\eta\epsilon^2{\bar g}_i^*) ;\\
&C_3(i)= - \frac{P_t}{\gamma}(\epsilon{\bar g}_i{\bar g}_i^*{\bar f}_i^*+\epsilon\eta^2{\bar f}_i^*) ;\\
&C_4(i)= - \frac{P_t}{\gamma}(\epsilon{\bar g}_i{\bar g}_i^*{\bar f}_i+\epsilon\eta^2{\bar f}_i) ;\\
&C_5(i)= - \frac{P_t}{\gamma}\epsilon\eta{\bar f}_i^*{\bar g}_i;\\
&C_6(i)= - \frac{P_t}{\gamma}\epsilon\eta{\bar f}_i^*{\bar g}_i^*;\\
&C_7(i)= - \frac{P_t}{\gamma}\epsilon\eta{\bar f}_i{\bar g}_i;\\
&C_8(i)= - \frac{P_t}{\gamma}\epsilon\eta{\bar f}_i{\bar g}_i^*;\\
&C_9(i)= \eta^2\sigma_i^2 - \frac{P_t}{\gamma}(\eta^2{\bar f}_i{\bar f}_i^*+\epsilon^2{\bar g}_i{\bar g}_i^*+\epsilon^2\eta^2)
\end{array}
\right.
\end{equation*}
and
\begin{equation*}
\left\{
\begin{array}{rl}
&D_1(k,\ell)= -\frac{P_t}{\gamma}\eta{\bar f}_k{\bar f}_{\ell}^*{\bar g}_{\ell} ;\\ 
&D_2(k,\ell)= -\frac{P_t}{\gamma}\eta{\bar f}_k{\bar f}_{\ell}^*{\bar g}_k^* ;\\ 
&D_3(k,\ell)= -\frac{P_t}{\gamma}\epsilon{\bar g}_k^*{\bar g}_{\ell}{\bar f}_{\ell}^* ;\\ 
&D_4(k,\ell)= -\frac{P_t}{\gamma}\epsilon{\bar g}_k^*{\bar g}_{\ell}{\bar f}_k ;\\ 
&D_5(k,\ell)= -\frac{P_t}{\gamma}\epsilon\eta{\bar f}_{\ell}^*{\bar g}_{\ell} ;\\ 
&D_6(k,\ell)= -\frac{P_t}{\gamma}\epsilon\eta{\bar f}_{\ell}^*{\bar g}_{k}^*;\\ 
&D_7(k,\ell)= -\frac{P_t}{\gamma}\epsilon\eta{\bar f}_{k}{\bar g}_{\ell} ;\\ 
&D_8(k,\ell)= -\frac{P_t}{\gamma}\epsilon\eta{\bar f}_{k}{\bar g}_{k}^* ;\\ 
&D_9(k,\ell)= -\frac{P_t}{\gamma}\eta^2{\bar f}_k{\bar f}_{\ell}^* ;\\ 
&D_{10}(k,\ell)= -\frac{P_t}{\gamma}\epsilon^2{\bar g}_k^*{\bar g}_{\ell} ;\\ 
&D_{11}(k,\ell)= -\frac{P_t}{\gamma}\epsilon\eta^2{\bar f}_{\ell}^* ;\\ 
&D_{12}(k,\ell)= -\frac{P_t}{\gamma}\epsilon\eta^2{\bar f}_{k}  ;\\ 
&D_{13}(k,\ell)= -\frac{P_t}{\gamma}\epsilon^2\eta{\bar g}_{\ell} ;\\ 
&D_{14}(k,\ell)= -\frac{P_t}{\gamma}\epsilon^2\eta{\bar g}_{k}^*  ;\\ 
&D_{15}(k,\ell)= -\frac{P_t}{\gamma}\epsilon^2\eta^2.\\ 
\end{array}
\right.
\end{equation*}

\subsection{The Second-order Moment Inequality Method}
The aforementioned fourth-order moment inequality method can provide a feasible solution to Problem \eqref{main1}. However, as it involves  all orders (from first to fourth) of channel uncertainties in the outage constraint, the ultimate explicit expression is rather complicated. In the robust design, we observe that channel uncertainties are usually small and thus higher-order uncertainty terms contribute very little to the SNR quantity. Hence, we are motivated to ignore the third and fourth-order uncertainty terms so as to significantly simplify our problem. 

To proceed, we let $f_{quad}({\bm W},{\bm x},{\bm y})$ be the quadratic approximation of $f({\bm W},{\bm x},{\bm y})$, obtained by dropping the cubic and quartic terms in $f({\bm W},{\bm x},{\bm y})$. 
Then, we consider the following second-order approximation of the original chance constraint:
\begin{equation}\label{2nd-chance}
\Pr (f_{quad}({\bm W},{\bm x},{\bm y})\ge 0)\le \rho. 
\end{equation}
To tackle the above constraint, we define
\begin{eqnarray*}
\bar{f}_{quad}({\bm W},{\bm x},{\bm y})&=& f_{quad}({\bm W},{\bm x},{\bm y})+a_0({\bm W})\\
&=& {\bm W}\cdot {\bm M}_{quad}({\bm x}, {\bm y}) \\
&=& {\rm vec}({\bm W})^H{\rm vec}({\bm M}_{quad}({\bm x}, {\bm y}))
\end{eqnarray*}
and
${\rm \mathbb{E}}\big[ \bar{f}_{quad}({\bm W},{\bm x},{\bm y})^2  \big]={\rm vec}({\bm W})^H{\bm U}_{quad}{\rm vec}({\bm W}),$
where
${\bm U}_{quad}={\rm \mathbb{E}}\big[{\rm vec}({\bm M}_{quad}({\bm x}, {\bm y})){\rm vec}({\bm M}_{quad}({\bm x}, {\bm y}))^H\big]$.
The explicit forms of ${\bm M}_{quad}({\bm x}, {\bm y})$ and $\bm U_{quad}$ are given in the sequel.
By Theorem 1, the second-order cone constraint
$$a_0({\bm W}) \ge c_2(\rho) \| {\bm U_{quad}}^{1/2} {\rm vec}({\bm W})\|$$
serves as a safe approximation of the chance constraint \eqref{2nd-chance}.
This yields the following safe approximation of Problem \eqref{main1}:
\begin{equation}\label{main2_quad}
\begin{array}{ll}
\displaystyle\min_{{\bm W} \in \mathbb{C}^{L \times L}} &  \quad \displaystyle (P_t{\bm I}\odot(\bar{\bm f}\bar{\bm f}^H+\epsilon^2{\bm I}) + {\bm \Sigma})\cdot {\bm W}\\
\text{subject to} \quad &\quad \displaystyle a_0({\bm W}) \ge c_2(\rho) \| {\bm U}_{quad}^{1/2} {\rm vec}({\bm W})\|,\\
& \quad {\bm W} \succeq 0.
\end{array}
\end{equation}
Note that Problem \eqref{main2_quad} can be readily handled by applying SDR technique and off-the-shelf convex solvers.

The explicit expression for $\bm M_{quad}({\bm x}, {\bm y})$ and $\bm U_{quad}$ can be obtained in a similar way. We provide the explicit expression of $\bm M_{quad}({\bm x}, {\bm y})$ and $\bm U_{quad}$
in Table \ref{tab:summary1} with
\begin{equation*}\label{Coeff_Mii-quad}
\left\{
\begin{array}{rl}
&C_1(i)=\eta\sigma_i^2{\bar g}_i - \frac{P_t}{\gamma}\eta{\bar f}_i{\bar f}_i^*{\bar g}_i ;\\ 
&C_2(i)=\eta\sigma_i^2{\bar g}_i^* - \frac{P_t}{\gamma}\eta{\bar f}_i{\bar f}_i^*{\bar g}_i^* ;\\  
&C_3(i)= - \frac{P_t}{\gamma}\epsilon{\bar g}_i{\bar g}_i^*{\bar f}_i^* ;\\ 
&C_4(i)= - \frac{P_t}{\gamma}\epsilon{\bar g}_i{\bar g}_i^*{\bar f}_i ;\\ 
&C_5(i)= - \frac{P_t}{\gamma}\epsilon\eta{\bar f}_i^*{\bar g}_i;\\ 
&C_6(i)= - \frac{P_t}{\gamma}\epsilon\eta{\bar f}_i^*{\bar g}_i^*;\\ 
&C_7(i)= - \frac{P_t}{\gamma}\epsilon\eta{\bar f}_i{\bar g}_i;\\ 
&C_8(i)= - \frac{P_t}{\gamma}\epsilon\eta{\bar f}_i{\bar g}_i^*;\\ 
&C_9(i)= \eta^2\sigma_i^2 - \frac{P_t}{\gamma}(\eta^2{\bar f}_i{\bar f}_i^*+\epsilon^2{\bar g}_i{\bar g}_i^*)
\end{array}
\right.
\end{equation*}
and
\begin{equation*}\label{Coeff_Mij-quad}
\left\{
\begin{array}{rl}
&D_1(k,\ell)= -\frac{P_t}{\gamma}\eta{\bar f}_k{\bar f}_{\ell}^*{\bar g}_{\ell} ;\\ 
&D_2(k,\ell)= -\frac{P_t}{\gamma}\eta{\bar f}_k{\bar f}_{\ell}^*{\bar g}_k^* ;\\ 
&D_3(k,\ell)= -\frac{P_t}{\gamma}\epsilon{\bar g}_k^*{\bar g}_{\ell}{\bar f}_{\ell}^* ;\\ 
&D_4(k,\ell)= -\frac{P_t}{\gamma}\epsilon{\bar g}_k^*{\bar g}_{\ell}{\bar f}_k ;\\ 
&D_5(k,\ell)= -\frac{P_t}{\gamma}\epsilon\eta{\bar f}_{\ell}^*{\bar g}_{\ell} ;\\ 
&D_6(k,\ell)= -\frac{P_t}{\gamma}\epsilon\eta{\bar f}_{\ell}^*{\bar g}_{k}^*;\\ 
&D_7(k,\ell)= -\frac{P_t}{\gamma}\epsilon\eta{\bar f}_{k}{\bar g}_{\ell} ;\\ 
&D_8(k,\ell)= -\frac{P_t}{\gamma}\epsilon\eta{\bar f}_{k}{\bar g}_{k}^* ;\\ 
&D_9(k,\ell)= -\frac{P_t}{\gamma}\eta^2{\bar f}_k{\bar f}_{\ell}^* ;\\ 
&D_{10}(k,\ell)= -\frac{P_t}{\gamma}\epsilon^2{\bar g}_k^*{\bar g}_{\ell}.\\ 
\end{array}
\right.
\end{equation*}
Apparently, the above second-order moment inequality-based approach involves less perturbation terms compared to its fourth-order counterpart. In fact, compared to Problem \eqref{main2}, it is easier to find a feasible solution to Problem \eqref{main2_quad}.


\ifconfver
\begin{table*}[th]
	\else
	\begin{table}[h]
		\fi
		\caption{Explicit Expressions in Second-order Moment Inequality-based Approach}
		\label{tab:summary1}
		\linespread{1.25} \rm \footnotesize 
		\begin{tabular}{c||l}
			\hline\hline
			${\bm M}({\bm x}, {\bm y})_{(i,i)}$ &
			\begin{minipage}{0.8\textwidth}
				\begin{center}
\begin{eqnarray*}\label{{Mii-quad}}
	&&{\bm M}({\bm x}, {\bm y})_{(i,i)}= \eta\sigma_i^2({y_i}{\bar g}_i^*+{\bar g}_i{y_i}^*) + \eta^2\sigma_i^2{y_i}{y_i}^* - \frac{P_t}{\gamma}\big[
	\eta{\bar f}_i{\bar f}_i^*({y_i}^*{\bar g}_i +
	{\bar g}_i^*{y_i}) +\epsilon{\bar g}_i{\bar g}_i^*({x_i}{\bar f}_i^*+{\bar f}_i{x_i}^*)\\
	&&+\eta^2{\bar f}_i{\bar f}_i^*{y_i}{y_i}^* + \epsilon^2 {\bar g}_i{\bar g}_i^*{x_i}{x_i}^* + \epsilon\eta({x_i}{\bar f}_i^*+{\bar f}_i{x_i}^*)({y_i}^*{\bar g}_i+{\bar g}_i^*{y_i})\big] 
\end{eqnarray*}
				\end{center}
			\end{minipage}
			\\ \hline\hline
			${\bm M}({\bm x}, {\bm y})_{(k,\ell)}$ &
			\begin{minipage}{0.8\textwidth}
				\begin{center}
\begin{eqnarray*}\label{{Mij-quad}}
	&&{\bm M}({\bm x}, {\bm y})_{(k,\ell)}=-\frac{P_t}{\gamma}\big[
	\eta{\bar f}_k{\bar f}_{\ell}^*({y_k}^*{\bar g}_{\ell} +{\bar g}_k^*{y_{\ell}}) +\epsilon{\bar g}_k^*{\bar g}_{\ell}({x_k}{\bar f}_{\ell}^*+{\bar f}_k{x_{\ell}}^*)\\
	&&+\eta^2{{\bar f}_k}{{\bar f}_{\ell}}^*{y_k}^*{y_{\ell}} + \epsilon^2 {\bar g}_k^*{\bar g}_{\ell}{x_k}{x_{\ell}}^* + \epsilon\eta({x_k}{\bar f}_{\ell}^*+{\bar f}_k{x_{\ell}}^*)({y_k}^*{\bar g}_{\ell}+{\bar g}_k^*{y_{\ell}})\big], \quad 
1 \le i\le L, 1 \le k \neq \ell \le L
\end{eqnarray*}
				\end{center}
			\end{minipage}
			\\ \hline\hline
			${\bm U}^{quad}_{(i,j),(k,\ell)}$ &
			\begin{minipage}{0.8\textwidth}
				\begin{center}			 	 
\begin{equation*}\label{Uijkl-quad}
{\bm U}^{quad}_{(i,j),(k,\ell)}=\left\{
\begin{array}{rl}
\Sigma_{m=1}^9 C_m(i)C_m(i)^*    &\ {\rm if}~ 1 \le i=j=k=\ell \le L;\\
C_9(i)C_9(k)^*&\ {\rm if}\ i=j,\  k=\ell, \  i \neq k;\\
C_1(i)D_1(i,\ell)^* + C_3(i)D_3(i,\ell)^* + C_5(i)D_5(i,\ell)^* &\ {\rm if}\ i=j, \ k\neq\ell,\ i=k;\\
C_2(i)D_2(k,i)^* + C_4(i)D_4(k,i)^* + C_8(i)D_8(k,i)^*  &\ {\rm if}\ i=j, \ k\neq\ell,\ i=\ell;\\
D_1(i,j)C_1(i)^*+D_3(i,j)C_3(i)^* + D_5(i,j)C_5(i)^* &\ {\rm if}\ i\neq j, \ k=\ell,\ i=k;\\
D_2(i,j)C_2(j)^*+D_4(i,j)C_4(j)^* + D_8(i,j)C_8(j)^*&\ {\rm if}\ i\neq j, \ k=\ell,\ j=k;\\
\Sigma_{n=1}^{10} D_n(i,j)D_n(i,j)^*    &\ {\rm if} \ i\neq j,\ k\neq \ell,\ i=k,\ j=\ell;\\
D_1(k,j)D_1(k,\ell)^*+D_3(k,j)D_3(k,\ell)^* + D_5(k,j)D_5(k,\ell)^* &\ {\rm if}\ i\neq j,\ k\neq \ell,\ i=k,\ j\neq\ell;\\
D_2(i,\ell)D_2(k,\ell)^*+D_4(i,\ell)D_4(k,\ell)^* + D_8(i,\ell)D_8(k,\ell)^* &\ {\rm if}\ i\neq j,\ k\neq \ell,\ i\neq k,\ j=\ell;\\
0\ &\ {\rm otherwise.}\
\end{array}
\right.
\end{equation*}
				\end{center}
			\end{minipage}
			\\ \hline\hline
		\end{tabular}
		\ifconfver
	\end{table*}
	\else
\end{table}
\fi

\section{The Bernstein-type Inequality Approach} \label{sec:4}
Although the fourth-order moment inequality-based approach is much more restricted and the feasibility rate can be lower, it can provide a more robust AF design, compared to the second-order counterpart. Hence, in practice, we can first try to find an AF weight via Problem \eqref{main2}. If it cannot provide a feasible solution, we then drop the higher-order perturbations and resort to a quadratically perturbed chance constraint problem, e.g., Problem \eqref{main2_quad}, to find a relatively good AF weight. In this section, we introduce another typical way of tackling the quadratically perturbed chance constraint, i.e., to use the so-called Bernstein-type inequality \cite{wang2014outage}. 

To be specific, we define ${\bm \xi}=\sqrt{2}\left(\begin{array}{c}{\rm Re}({\bm x})\\{\rm Im}({\bm x})\\{\rm Re}({\bm y})\\{\rm Im}({\bm y}) \end{array}\right)\sim\mathcal{N}(0,{\bm {I}}_{4L})$, with ${\rm Re}(\cdot)$ and ${\rm Im}(\cdot)$ denoting the real part and imaginary part of a complex number, respectively. Then, the quadratic approximation can be rewritten as
\begin{align}\label{SNR_B2}
{f}_{quad}({\bm \xi}) = s_0({\bm \xi})+s_1({\bm \xi})+s_2({\bm \xi}),
\end{align}
where $s_0({\bm \xi}), s_1({\bm \xi}), s_2({\bm \xi})$ represent the constant term, linear term, and quadratic term in ${f}_{quad}({\bm \xi})$, respectively. More precisely, we have
\begin{align*}
&s_0({{\bm \xi}})\\
= &- \sigma_{{v}}^2 - {\bm W}\cdot \big({\bm \Sigma}\odot (\bar{\bm g}\bar{\bm g}^H)\big) + \frac{P_t}{\gamma} {\bm W}\cdot \big({\bar {\bm f}}{\bar{\bm f}}^H \odot ({\bar{\bm g}}^*)({\bar{ \bm g}}^*)^H\big), \\
&s_1({{\bm \xi}}) \\
= &{\bm W}\cdot\big(
\frac{P_t \epsilon}{\gamma} ({\bm x}\bar{\bm f}^H+ \bar{\bm f}{\bm x}^H)\odot{\bar{\bm g}}^*({\bar{ \bm g}}^*)^H-\eta{\bm \Sigma}\odot (\bar{\bm g}{\bm y^H}+{\bm y}\bar{\bm g}^H)\big)\\
& +\frac{P_t\eta}{\gamma} {\bm W}\cdot\big((\bar{\bm f}\bar{\bm f}^H)\odot(({\bm y}^*)(\bar{\bm g}^*)^H+(\bar{\bm g}^*)({\bm y}^*)^H), \\
&s_2({{\bm \xi}}) \\
= &\frac{P_t}{\gamma} {\bm W}\cdot \big(\epsilon^2({\bm x}{\bm x}^H)\odot({\bar{\bm g}}^*)({\bar{ \bm g}}^*)^H  +\eta^2(\bar{\bm f}\bar{\bm f}^H)\odot(({\bm y}^*)({\bm y}^*)^H)\big) \\
&+\frac{P_t}{\gamma} {\bm W}\cdot \big(\eta\epsilon ({\bm x}\bar{\bm f}^H+ \bar{\bm f}{\bm x}^H)\odot(({\bm y}^*)(\bar{\bm g}^*)^H+(\bar{\bm g}^*)({\bm y}^*)^H)
 \big)\\
&-\eta^2{\bm W}\cdot\big({\bm \Sigma}\odot ({\bm y}{\bm y}^H)\big).
\end{align*}
Define ${\bm F}=\bar{\bm f}\bar{\bm f}^H$, ${\bm G}=(\bar{\bm g}^*)(\bar{\bm g}^*)^H$, one can easily verify that $$s_1({{\bm \xi}})= {\bm v}_{{\bm \xi}}^T{\bm \xi},$$where
$${\bm v}_{{\bm \xi}}={\sqrt{2}}\left(\begin{array}{c}
\epsilon\frac{Pt}{\gamma}{\rm Re}\big({\bm W}\odot{\bm G}^*)\bar{\bm f}\big)\\
\epsilon\frac{Pt}{\gamma}{\rm Im}\big({\bm W}\odot{\bm G}^*)\bar{\bm f}\big)\\
- \eta {\rm Re}({\bm W}\odot {\bm \Sigma}){\rm Re}(\bar{\bm g}) + \eta\frac{Pt}{\gamma}{\rm Re}\big({\bm W}^*\odot{\bm F})\bar{\bm g}\big) \\
- \eta {\rm Re}({\bm W}\odot {\bm \Sigma}){\rm Im}(\bar{\bm g}) + \eta\frac{Pt}{\gamma}{\rm Im}\big({\bm W}^*\odot{\bm F})\bar{\bm g}\big) \\
\end{array}
\right).$$ In addition, consider
\begin{align*}\label{s2}
{\bm R}=\frac{1}{2}\left(
\begin{array}{cc}
\epsilon^2 \frac{P_t}{\gamma} {\bm K_1} & \epsilon\eta\frac{P_t}{\gamma}{\bm K_2}\\
\epsilon\eta\frac{P_t}{\gamma}{\bm K_3} & -\eta^2{\bm K_4^{(1)}} + \eta^2 \frac{P_t}{\gamma}{\bm K_4^{(2)}}\\
\end{array}
\right),
\end{align*}
where
\begin{align*}
&{\bm K_1} = \left(
\begin{array}{cc}
{\rm Re}({\bm W}^*\odot{\bm G}) & {\rm Im}({\bm W}^*\odot{\bm G})\\
-{\rm Im}({\bm W}^*\odot{\bm G}) & {\rm Re}({\bm W}^*\odot{\bm G})\\
\end{array}
\right),\\
&{\bm K_4^{(1)}} =  \left(
\begin{array}{cc}
{\rm Re}({\bm W}\odot{\bm \Sigma}) & {\bm 0}_{L\times L}\\
{\bm 0}_{L\times L} & {\rm Re}({\bm W}\odot{\bm \Sigma})\\
\end{array}
\right),\\
&{\bm K_4^{(2)}} = \left(
\begin{array}{cc}
{\rm Re}({\bm W}\odot{\bm F}^*) & {\rm Im}({\bm W}\odot{\bm F}^*)\\
-{\rm Im}({\bm W}\odot{\bm F}^*) & {\rm Re}({\bm W}\odot{\bm F}^*)\\
\end{array}
\right), \\
\end{align*}
\ifconfver
and ${\bm K_2}, {\bm K_3}$ are given at the top of the next page.
\begin{figure*}[!btp]
	\normalsize
	\begin{align*}
	&{\bm K_2} = \left(
	\begin{array}{cc}
	{\rm Diag}\big({\rm Re}({\bm W}(\bar{\bm f}\odot\bar{\bm g}^*))\big) + {\rm Re}\big({\bm W}\odot(\bar{\bm g}\bar{\bm f}^T)\big)& 
	{\rm Diag}\big({\rm Im}({\bm W}^*(\bar{\bm f}^*\odot\bar{\bm g}))\big) + {\rm Im}\big({\bm W}\odot(\bar{\bm g}\bar{\bm f}^T)\big) \\
	{\rm Diag}\big({\rm Im}({\bm W}(\bar{\bm f}\odot\bar{\bm g}^*))\big) + {\rm Im}\big({\bm W}\odot(\bar{\bm g}\bar{\bm f}^T)\big)&
	{\rm Diag}\big({\rm Re}({\bm W}^*(\bar{\bm f}^*\odot\bar{\bm g}))\big) - {\rm Re}\big({\bm W}\odot(\bar{\bm g}\bar{\bm f}^T)\big) \\
	\end{array}
	\right)\\
	&{\bm K_3} = \left(
	\begin{array}{cc}
	{\rm Diag}\big({\rm Re}({\bm W}(\bar{\bm f}\odot\bar{\bm g}^*))\big) + {\rm Re}\big({\bm W}^*\odot(\bar{\bm f}\bar{\bm g}^T)\big)& 
	{\rm Diag}\big({\rm Im}({\bm W}(\bar{\bm f}\odot\bar{\bm g}^*))\big) + {\rm Im}\big({\bm W}^*\odot(\bar{\bm f}\bar{\bm g}^T)\big) 
	\\
	{\rm Diag}\big({\rm Im}({\bm W}^*(\bar{\bm f}^*\odot\bar{\bm g}))\big) + {\rm Im}\big({\bm W}^*\odot(\bar{\bm f}\bar{\bm g}^T)\big)&
	{\rm Diag}\big({\rm Re}({\bm W}^*(\bar{\bm f}^*\odot\bar{\bm g}))\big) - {\rm Re}\big({\bm W}^*\odot(\bar{\bm f}\bar{\bm g}^T)\big)
	\\
	\end{array}
	\right)\\
	\end{align*}
	\hrulefill
	\vspace*{4pt}
\end{figure*}
\else
and ${\bm K_2}, {\bm K_3}$ are given as follows:
\begin{align*}
&{\bm K_2} = \left(
\begin{array}{cc}
	{\rm Diag}\big({\rm Re}({\bm W}(\bar{\bm f}\odot\bar{\bm g}^*))\big) + {\rm Re}\big({\bm W}\odot(\bar{\bm g}\bar{\bm f}^T)\big)& 
	{\rm Diag}\big({\rm Im}({\bm W}^*(\bar{\bm f}^*\odot\bar{\bm g}))\big) + {\rm Im}\big({\bm W}\odot(\bar{\bm g}\bar{\bm f}^T)\big) \\
	{\rm Diag}\big({\rm Im}({\bm W}(\bar{\bm f}\odot\bar{\bm g}^*))\big) + {\rm Im}\big({\bm W}\odot(\bar{\bm g}\bar{\bm f}^T)\big)&
	{\rm Diag}\big({\rm Re}({\bm W}^*(\bar{\bm f}^*\odot\bar{\bm g}))\big) - {\rm Re}\big({\bm W}\odot(\bar{\bm g}\bar{\bm f}^T)\big) \\
\end{array}
\right)\\
&{\bm K_3} = \left(
\begin{array}{cc}
	{\rm Diag}\big({\rm Re}({\bm W}(\bar{\bm f}\odot\bar{\bm g}^*))\big) + {\rm Re}\big({\bm W}^*\odot(\bar{\bm f}\bar{\bm g}^T)\big)& 
	{\rm Diag}\big({\rm Im}({\bm W}(\bar{\bm f}\odot\bar{\bm g}^*))\big) + {\rm Im}\big({\bm W}^*\odot(\bar{\bm f}\bar{\bm g}^T)\big) 
	\\
	{\rm Diag}\big({\rm Im}({\bm W}^*(\bar{\bm f}^*\odot\bar{\bm g}))\big) + {\rm Im}\big({\bm W}^*\odot(\bar{\bm f}\bar{\bm g}^T)\big)&
	{\rm Diag}\big({\rm Re}({\bm W}^*(\bar{\bm f}^*\odot\bar{\bm g}))\big) - {\rm Re}\big({\bm W}^*\odot(\bar{\bm f}\bar{\bm g}^T)\big)
	\\
\end{array}
\right)\\
\end{align*}
\fi
We symmetrize ${\bm R}$ by setting $\bar{{\bm R}} = \frac{1}{2}({{\bm R}}+{{\bm R}}^T)$, so that $s_2({{\bm \xi}})= {\bm \xi}^T \bar{{\bm R}} {\bm \xi}$. Furthermore, denote $s_{{\bm \xi}}=s_0({\bm \xi})$, so that we can rewrite the quadratic approximation as
$${f}_{quad}({\bm \xi}) = s_{{\bm \xi}}+{\bm v}_{{\bm \xi}}^T{\bm \xi}+{\bm \xi}^T \bar{\bm R} {\bm \xi}.$$
Now, we apply the  Bernstein-type inequality to the chance constraint
$${\rm Pr} \{s_{{\bm \xi}}+{\bm v}_{{\bm \xi}}^T{\bm \xi}+{\bm \xi}^T \bar{{\bm R}} {\bm \xi}\ge 0\} \ge 1-\rho$$
and obtain the following safe approximation of our problem:
\begin{equation}\label{main3}
\begin{array}{ll}
\displaystyle\min_{{\bm W}, \lambda, \delta
} &
 \displaystyle (P_t{\bm I}\odot(\bar{\bm f}\bar{\bm f}^H+\epsilon^2{\bm I}) + {\bm \Sigma})\cdot {\bm W}\\
\text{subject to}  &\displaystyle {\rm Tr}(\bar{{\bm R}}) - 2\sqrt{{\rm ln}(1/\rho)}\cdot \delta + 2{\rm ln}(\rho)\cdot \lambda + s_{\bm \xi} \ge 0, \\
& \sqrt{\|\bar{{\bm R}}\|^2_F + \frac{1}{2}\|{\bm v}_{\bm \xi}\|^2} \le \delta, \\
& \lambda {\bm I}_{4L} + \bar{{\bm R}} \succeq {\bm 0}, \\
& \lambda \ge 0,\\
& {\bm W} \succeq 0.
\end{array}
\end{equation}
Again, the problem can be readily solved by the SDR technique and convex solvers.

{\it{Remark 1:}} For both the moment inequality-based and
Bernstein-type inequality-based AF design, we target at solving Problem \eqref{main1} to find a good AF weight matrix ${\bm W}$, rather than an AF weight vector ${\bm w}$, since we have relaxed the rank constraint in \eqref{main0} for the sake of computational tractability.  Hence, after we have solved \eqref{main1}, we still need to apply the Gaussian randomization algorithm \cite{luo2010semidefinite}; more specifically, see Algorithm 1 in \cite{wu2015relaysbf} on how to find an approximate rank-one AF weight vector ${\bm w}$.

{\it{Remark 2:}} It is worth noting that solving Problem \eqref{main2} yields a safe approximate solution to Problem \eqref{main1}, while solving Problem \eqref{main2_quad} or Problem \eqref{main3} yields only approximate but not necessarily safe solution to Problem \eqref{main1}. This is because we drop the higher-order perturbation terms from the SNR expression.

\section{A Relative Tightness Analysis}\label{sec:5}
When we tackle the second-order approximation of the chance constraint in \eqref{chance_Q}---i.e., $\Pr (f_{quad}({\bm W},{\bm x},{\bm y})\ge 0)\le \rho $---we resort to two types of restriction approaches. The first is the second-order moment inequality approach in \eqref{main2_quad} and the other is the Bernstein-type inequality approach in \eqref{main3}.  In this section, we provide a theoretical relative tightness result for the two approaches.
To proceed, we consider a general Gaussian quadratic polynomial
\begin{eqnarray*}
	Q({\bm \xi}) &=& \sum_{i=1}^m \xi_ia_i + \sum_{i,j=1}^m \xi_i\xi_ja_{ij} = {\bm \xi}^T{\bm A}{\bm \xi} + {\bm a}^T{\bm \xi},
\end{eqnarray*}
where ${\bm \xi}\sim \mathcal{N}(\bz, \bm I)$. Then, our tightness result can be summarized in the following theorem:
\begin{thm} \label{thm:2}
For the chance constraint
\begin{equation} \label{eq:ch-quad}
\Pr( Q({\bm \xi}) \ge t) \le \rho, 
\end{equation}
consider its two safe approximations:
\begin{itemize}
	\item \textbf{Moment inequality-based safe approximation:}
\begin{equation*}
t\ge c(\rho)\sqrt{\E{Q({\bm\xi})^2}}.
\end{equation*}	

	\item \textbf{Bernstein-type inequality-based safe approximation:}
\begin{align*}
&\displaystyle -{\rm Tr}({{\bm A}}) - 2\sqrt{{\rm ln}(1/\rho)}\cdot \delta + 2{\rm ln}(\rho)\cdot \lambda +t \ge 0, \\
& \sqrt{\|{{\bm A}}\|^2_F + \frac{1}{2}\|{\bm a}\|^2} \le \delta, \\
& \lambda {\bm I}_{4L} - {{\bm A}} \succeq {\bm 0}, \\
& \lambda \ge 0.
\end{align*}	
\end{itemize}
Then, for any $\rho \in (\exp(-8),0.00045)$, the moment inequality-based safe approximation is always more conservative than the Bernstein-type inequality-based safe approximation.	
\end{thm}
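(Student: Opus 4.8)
The plan is to reduce each of the two safe approximations to a single explicit lower bound on $t$ and then compare the two bounds. Here \emph{more conservative} means that the moment-based feasible region sits inside the Bernstein-based one; equivalently, whenever the moment condition $t \ge c(\rho)\sqrt{\E{Q(\bm\xi)^2}}$ holds, the Bernstein system admits feasible $\delta,\lambda$. So the goal is to show that the moment threshold pointwise dominates the (minimized) Bernstein threshold for every $(\bm A,\bm a)$.

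First I would eliminate $\delta$ and $\lambda$ from the Bernstein system. Since $\rho\in(0,1)$ gives $\sqrt{\ln(1/\rho)}>0$ and $\ln\rho<0$, the left-hand side of the first Bernstein inequality is monotonically decreasing in both $\delta$ and $\lambda$; hence the system is feasible exactly when that inequality holds at the smallest admissible values, namely $\delta=\sqrt{\|\bm A\|_F^2+\tfrac12\|\bm a\|^2}$ and $\lambda=\max\{0,\lambda_{\max}(\bm A)\}$ (the latter forced by $\lambda\bm I-\bm A\succeq 0$ together with $\lambda\ge0$). This collapses the Bernstein approximation to the single scalar condition $t\ge B$, where
\[
B = {\rm Tr}(\bm A) + 2\sqrt{\ln(1/\rho)}\sqrt{\|\bm A\|_F^2+\tfrac12\|\bm a\|^2} + 2\ln(1/\rho)\max\{0,\lambda_{\max}(\bm A)\}.
\]

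Next I would make the moment threshold explicit. Taking $\bm A$ symmetric without loss of generality (only its symmetric part enters $Q$) and using the standard Gaussian identities $\E{\bm\xi^T\bm A\bm\xi}={\rm Tr}(\bm A)$, ${\rm Var}(\bm\xi^T\bm A\bm\xi)=2\|\bm A\|_F^2$, vanishing third moments, and $\E{(\bm a^T\bm\xi)^2}=\|\bm a\|^2$, I obtain $\E{Q(\bm\xi)^2}=2\|\bm A\|_F^2+({\rm Tr}\,\bm A)^2+\|\bm a\|^2=:N^2$, so the moment threshold is $M=c(\rho)N$. The lower endpoint of the window is exactly what lets me pin down $c(\rho)$ in closed form: for every $\rho>\exp(-8)$ we have $q(\rho)=2$, hence $c(\rho)=1/\sqrt\rho$, which is large (between roughly $47$ and $55$) on the stated interval.

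It then remains to prove $M\ge B$ for all $(\bm A,\bm a)$. Both $M$ and $B$ are positively homogeneous of degree one, so I would normalize $N=1$ and bound $B$. The elementary spectral inequalities $\max\{0,\lambda_{\max}(\bm A)\}\le\|\bm A\|_F\le N/\sqrt2$, ${\rm Tr}(\bm A)\le N$, and $\sqrt{\|\bm A\|_F^2+\tfrac12\|\bm a\|^2}\le N$ yield $B\le N\big(1+2\sqrt{\ln(1/\rho)}+\sqrt2\,\ln(1/\rho)\big)$, so the whole claim reduces to the one-variable transcendental inequality
\[
\frac{1}{\sqrt\rho}\ \ge\ 1+2\sqrt{\ln(1/\rho)}+\sqrt2\,\ln(1/\rho),
\]
which I would verify for $\rho\in(\exp(-8),0.00045)$, i.e.\ $u=\ln(1/\rho)\in(7.7,8)$, by monotonicity (the left side grows exponentially in $u/2$ while the right grows only linearly, and the left already dominates at $u=7.7$). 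The main obstacle is precisely this certification step together with keeping the bound on $B$ good enough to survive the window: the coarse bound above is already sufficient for the stated interval (indeed it covers a somewhat larger one), so the specific endpoint $0.00045$ reflects the strength of the bounds used rather than an intrinsic crossover. A sharper extremal analysis — identifying the worst-case spectrum, a single dominant positive eigenvalue with a thin compensating negative tail and $\bm a=\bm 0$ — would widen the admissible window, but it is not needed for the claim as stated; by contrast, the two preceding reductions are routine.
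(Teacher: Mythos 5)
Your proposal is correct and follows essentially the same route as the paper's own proof: both collapse the Bernstein system (by monotonicity in $\delta,\lambda$) to the explicit threshold $\mathrm{tr}(\bm A) + 2\sqrt{\ln(1/\rho)}\sqrt{\|\bm A\|_F^2 + \tfrac12\|\bm a\|_2^2} + 2\ln(1/\rho)\,s^+(\bm A)$, use that $q(\rho)=2$ hence $c(\rho)=1/\sqrt{\rho}$ on the stated window, express $\E{Q(\bm\xi)^2}$ through $\|\bm a\|_2,\|\bm A\|_F,\mathrm{tr}(\bm A)$, and finish with elementary norm/spectral bounds (e.g.\ $s^+(\bm A)\le\|\bm A\|_F$) plus verification of a scalar inequality in $\rho$. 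The only difference is cosmetic: you normalize $\sqrt{\E{Q(\bm\xi)^2}}=1$ and check a single transcendental inequality, while the paper routes through the $\ell_1$--$\ell_2$ bound on $(\|\bm a\|_2,\|\bm A\|_F,|\mathrm{tr}(\bm A)|)$ and two sufficient conditions, which is exactly where the endpoint $0.00045$ arises --- consistent with your (correct) remark that this endpoint reflects the bounds used rather than an intrinsic crossover.
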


\begin{proof}
We assume first that $\rho>\exp(-8)$ in the chance constraint \eqref{eq:ch-quad}.  
Using the fact that $\E{\xi_i\xi_j} = \delta_{ij}$, $\E{\xi_i\xi_j\xi_k} = 0$, and
\begin{eqnarray*}
	\E{\xi_i\xi_j\xi_k\xi_l} &=& \left\{
	\begin{array}{c@{\quad}l}
		3 & \mbox{if } i=j=k=l, \\
		1 & \mbox{if there are two distinct indices}, \\
		0 & \mbox{otherwise},
	\end{array}
	\right.
\end{eqnarray*}
we can compute
\begin{eqnarray*}
	\E{Q({\bm \xi})^2} &=& {\bm v}^T \E{{\bm \eta}{\bm \eta}^T} {\bm v} \\
	\noalign{\medskip}
	&=& \sum_{i=1}^m a_i^2 + 3 \sum_{i=1}^m a_{ii}^2 + \sum_{1\le i\not=j\le m} \left( a_{ii}a_{jj} + a_{ij}^2 \right) \\
	\noalign{\medskip}
	&=& \|{\bm a}\|_2^2 + \|{\bm A}\|_F^2 + (\mbox{tr}({\bm A}))^2 + \sum_{i=1}^m a_{ii}^2.
\end{eqnarray*}
Then, following the moment inequality-based approach, we have a safe tractable approximation of~(\ref{eq:ch-quad}):
\begin{align} \label{eq:mom-quad}
t &\ge \sqrt{\frac{\E{Q({\bm\xi})^2}}{\rho}} \\\notag
&= \left( \frac{\|{\bm a}\|_2^2 + \|{\bm A}\|_F^2 + (\mbox{tr}({\bm A}))^2 + \sum_{i=1}^m a_{ii}^2}{\rho} \right)^{1/2}.
\end{align}
On the other hand, the Bernstein-type inequality-based approach yields the following safe tractable approximation of (\ref{eq:ch-quad}):
\begin{equation} \label{eq:bern-quad}
t \ge \mbox{tr}({\bm A}) + 2\sqrt{\ln\frac{1}{\epsilon}}\sqrt{\|{\bm A}\|_F^2 + \frac{1}{2}\|{\bm a}\|_2^2} + \left(2\ln\frac{1}{\epsilon}\right) s^+({\bm A}),
\end{equation}
where $s^+({\bm A}) = \max\{\lambda_{\max}({\bm A}),0\}$.  We claim that for sufficiently small $\rho>\exp(-8)$, every feasible solution to (\ref{eq:mom-quad}) is feasible for (\ref{eq:bern-quad}), which means that (\ref{eq:mom-quad}) is more conservative than (\ref{eq:bern-quad}).  Indeed, if (\ref{eq:mom-quad}) is satisfied, then
\begin{align}
t &\ge \frac{1}{\sqrt{\rho}} \cdot \sqrt{\|{\bm a}\|_2^2 + \|{\bm A}\|_F^2 + (\mbox{tr}({\bm A}))^2} \nonumber \\
&\ge \frac{1}{\sqrt{3\rho}} \left( \|{\bm a}\|_2 + \|A\|_F + |\mbox{tr}({\bm A})| \right) \label{eq:norm-1} \\
&\ge \frac{1}{\sqrt{3\rho}} \left( \left[ \sqrt{ \frac{1}{16}\|{\bm A}\|_F^2 + \|{\bm a}\|_2^2 } + \frac{3}{4} \|{\bm A}\|_F \right] +  |\mbox{tr}({\bm A})|\right) \label{eq:norm-2} \\
&\ge \mbox{tr}({\bm A}) + \frac{1}{4\sqrt{3\rho}} \sqrt{\|{\bm A}\|_F^2 + \frac{1}{2}\|{\bm a}\|_2^2} + \frac{\sqrt{3}}{4\sqrt{\rho}} s^+({\bm A}), \label{eq:eps-bd-1}
\end{align}
where (\ref{eq:norm-1}) and (\ref{eq:norm-2}) follow from the fact that for any ${\bm x} \in \R^n$, 
$$ \|{\bm x}\|_1 \ge \|{\bm x}\|_2 \ge \frac{1}{\sqrt{n}}\|{\bm x}\|_1, $$
and (\ref{eq:eps-bd-1}) holds as long as $\rho<1/3$ (note that $s^+({\bm A}) \le \|{\bm A}\|_F$).  Now, the proof of the claim will be complete if
$$
\frac{1}{4\sqrt{3\rho}} \ge 2\sqrt{\ln\frac{1}{\rho}} \quad\mbox{and}\quad  \frac{\sqrt{3}}{4\sqrt{\rho}} \ge 2\ln\frac{1}{\rho}.
$$
It can be verified that the above inequalities hold if $\rho<0.00045$.  Hence, (\ref{eq:mom-quad}) is more conservative than (\ref{eq:bern-quad}) when $\rho \in (\exp(-8),0.00045)$ (note that $\exp(-8) \approx 0.0003)$. This completes the proof.
\end{proof}
It is worth remarking that a general tightness result for the moment inequality-based and Bernstein-type inequality-based safe approximations is still difficult to obtain, while Theorem \ref{thm:2} provides a provable result in a small region of $\rho$. In the next section, we will provide numerical validations via simulation results.

\begin{figure*}[ht!]
\centering
\begin{minipage}[c]{0.45\textwidth}
\centerline{\includegraphics[width = 8.1cm]{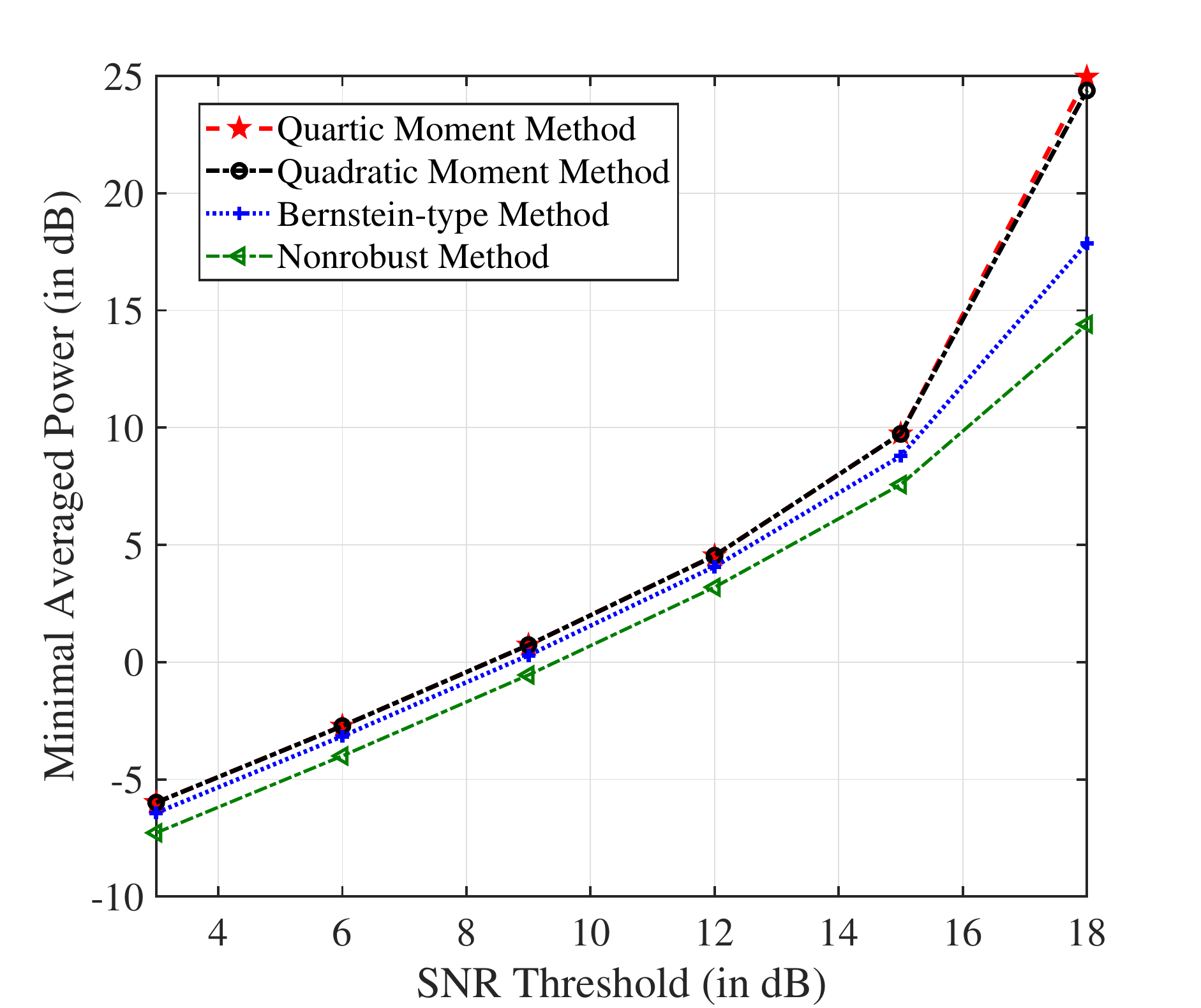}}
\caption{Minimum Power required versus the SNR threshold. For $\epsilon^2=\eta^2=0.002$; SNR outage percentage $\rho = 0.1$.}
\label{Fig:1}
\end{minipage}
\vspace{18pt}
\hfill
\begin{minipage}[c]{0.45\textwidth}
\centering
\centerline{\includegraphics[width = 8.1cm]{./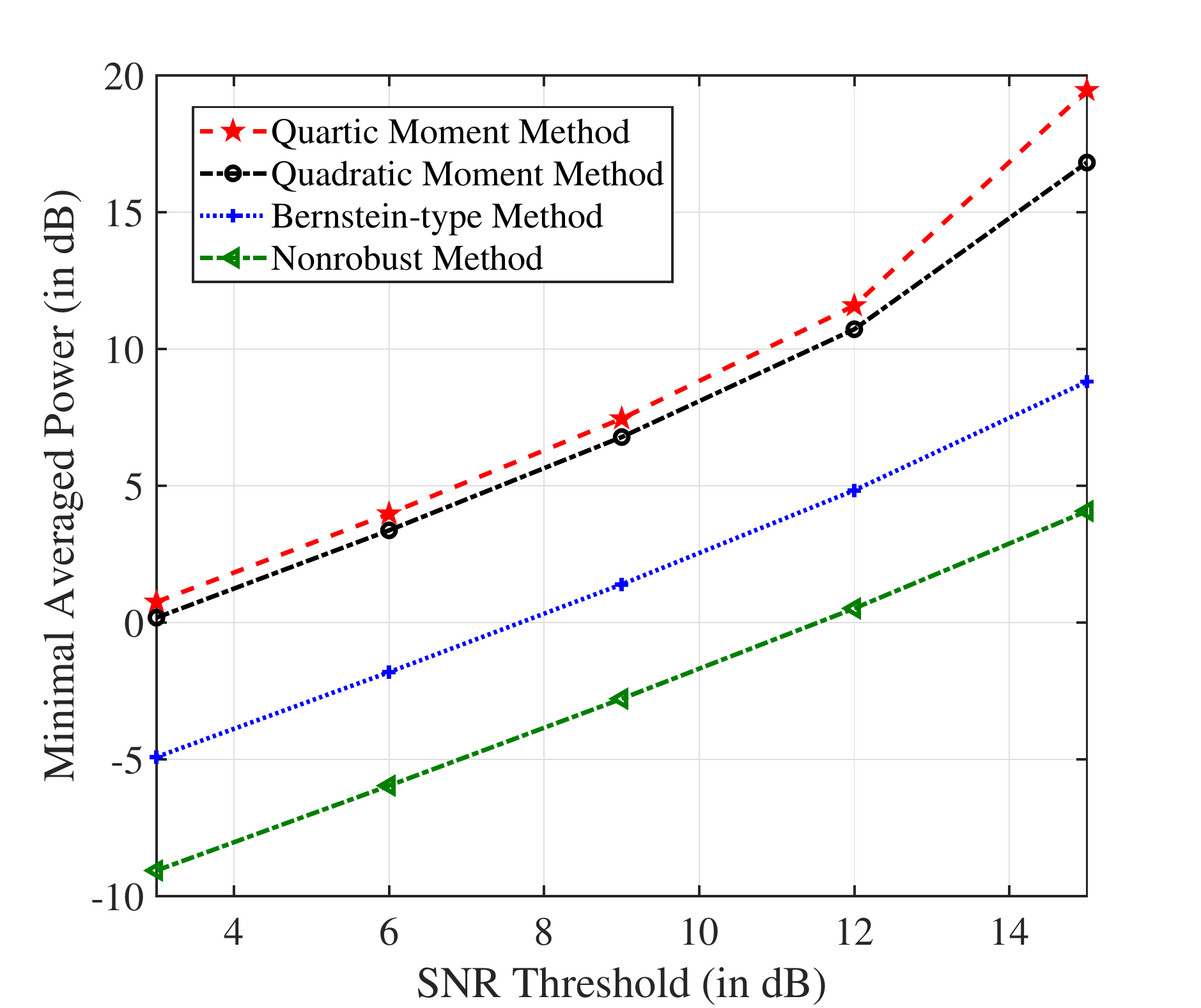}}
\caption{Minimum Power required versus the SNR threshold. For $\epsilon^2=\eta^2=0.06$; SNR outage percentage $\rho = 0.1$.}
\label{Fig:3}
\end{minipage}
\hfill
\begin{minipage}[c]{0.45\textwidth}
\centering
\centerline{\includegraphics[width = 8.4cm]{./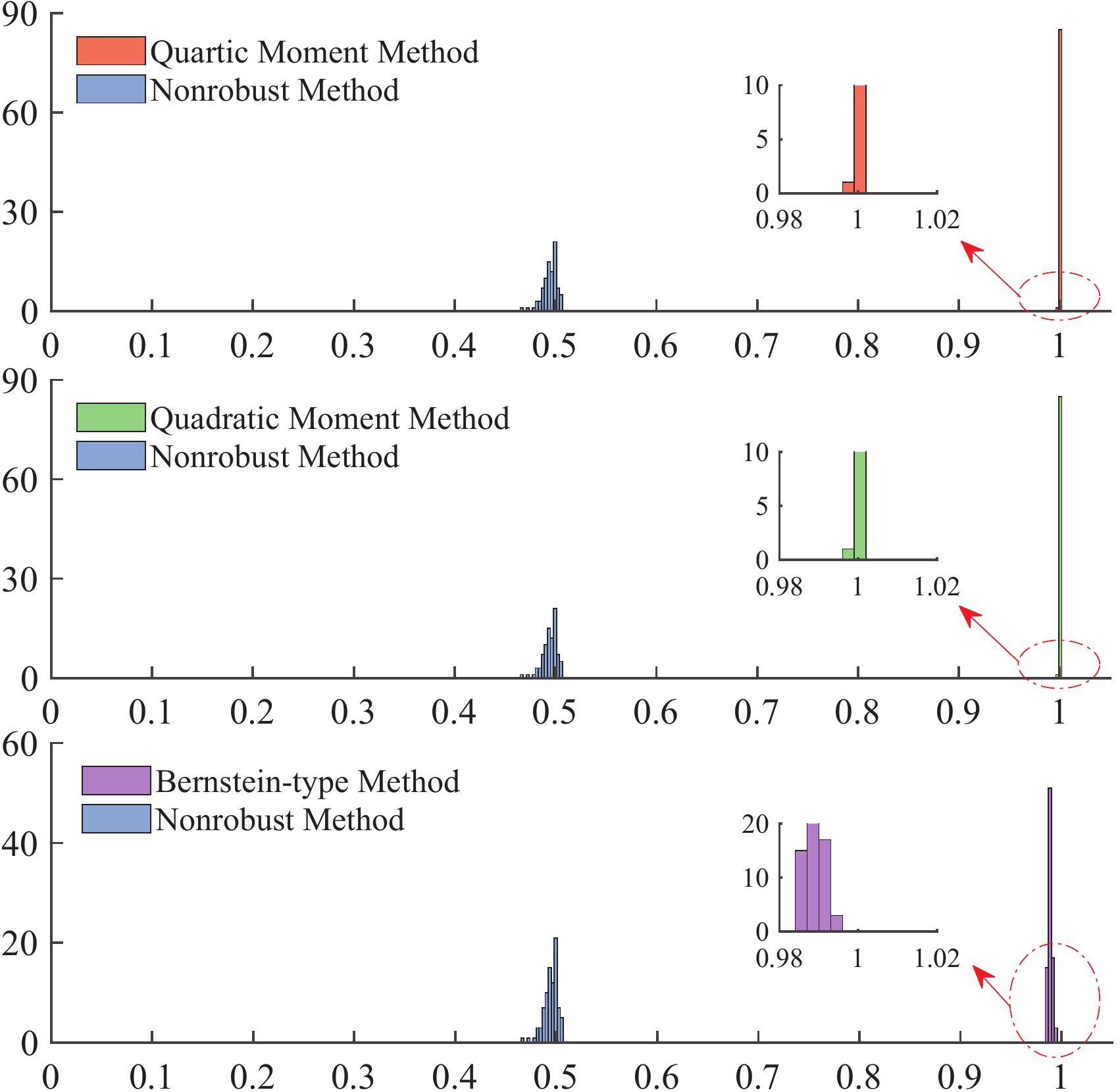}}
\caption{Histogram of the SNR satisfaction probability. For $\epsilon^2=\eta^2=0.002$; SNR outage percentage $\rho = 0.1$; SNR threshold $\gamma=18$dB.}
\label{Fig:2}
\end{minipage}
\hfill
\begin{minipage}[c]{0.45\textwidth}
\centering
\centerline{\includegraphics[width = 8.4cm]{./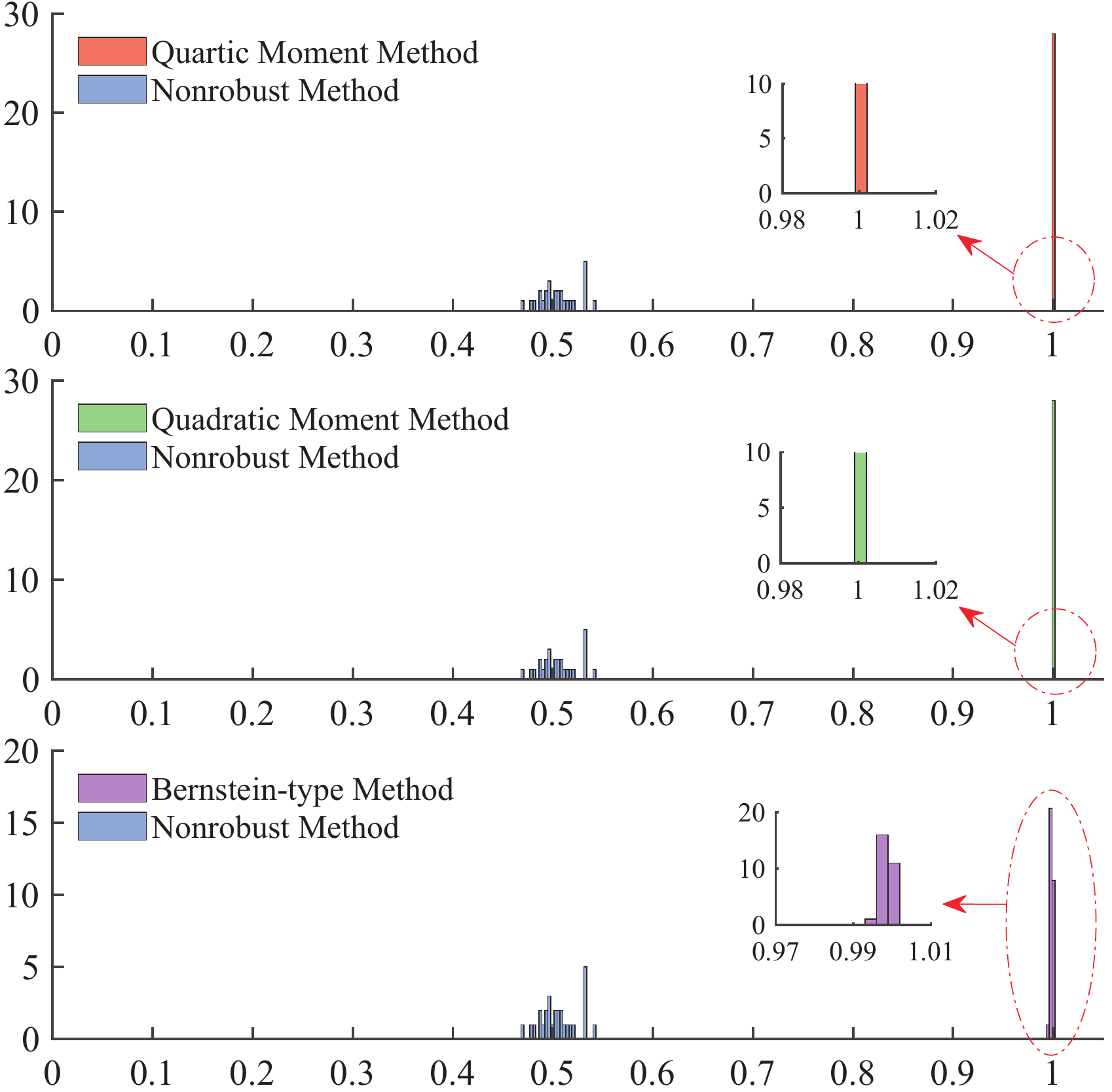}}
\caption{Histogram of the SNR satisfaction probability. For $\epsilon^2=\eta^2=0.06$; SNR outage percentage $\rho = 0.1$; SNR threshold $\gamma=12$dB.}
\label{Fig:4}
\end{minipage}
\end{figure*}

\section{Numerical Simulations}\label{sec:sim}
In this section, we provide numerical simulations to compare four different AF weight designs for our target problem, namely, the non-robust (NR) AF weight, the fourth-order moment inequality-based (M4) AF weight, the second-order moment inequality-based (M2) AF weight, and the Bernstein-type inequality-based (B2) AF weight. The setup of the experiments is as follows. The number of relays is set to be $L=4$; channels are generated by $\bar{\bm f} \sim \mathcal{CN}(0,{\bm I})$, $\bar{\bm g} \sim \mathcal{CN}(0,{\bm I})$ independently; channel errors are generated by $\Delta {\bm f} \sim \mathcal{CN}(0,\epsilon^2{\bm I})$, $\Delta {\bm g} \sim \mathcal{CN}(0,\eta^2{\bm I})$ independently, where the variances are specified in the sequel; the noise power at each relay is set to be $\sigma_\ell^2=0.25, \forall \ell$; the noise power at the receiver is $\sigma_{v}^2=0.25$; the outage probability is denoted by $\rho$. In this paper, we take $\rho=0.1$ to illustrate the performance of each scheme. Also, if the $kth$ largest eigenvalue of the SDR solution is $10^4$ times larger than the $(k+1)st$ largest eigenvalue, then the SDR solution is considered to be of rank-$k$.

\subsection{Comparison among the M4, M2, and B2 Approaches}
In Fig. \ref{Fig:1}, we present the averaged minimum power budget needed to satisfy the outage constraint as the SNR threshold varies from 3dB to $18$dB when $\rho=0.1$, $\epsilon^2=\eta^2=0.002$. Specifically, we compare the results for the non-robust and robust cases under different design approaches. In Fig. \ref{Fig:1}, we may consider that the NR design discards all perturbation terms, and thus the required power budget serves as a lower bound for all the robust designs.
We find that the M4 and M2 approaches require similar minimum power budgets, while the B2 and NR approaches require slightly less power budget. This is because the channel errors here are rather small and all the robust designs can easily handle the uncertainties in this case. This implies that the higher-order perturbation terms do not have much effect on the SNRs if the error is small.
Despite the similar power budgets, we can still see the differences of the tightness among the four design approaches. In Fig. \ref{Fig:2}, we test the outage cases of each approach by generating $1000$ complex Gaussian channel perturbations with specific variance and evaluating the SNR satisfaction probability for each channel realization (herein we calculate the SNR by \eqref{SNR_exact}). For each approach, we generate $100$ channel realizations and then pick up those realizations that are feasible for all methods to get the corresponding histograms in Fig. \ref{Fig:2}.  As shown in Fig. \ref{Fig:2},  the SNR satisfaction percentage of NR is the lowest, basically near $50\%$, which implies that the NR design is not reliable in general. On the contrary, all robust designs are very conservative and almost always satisfy the SNR constraint. It reveals that both the moment inequality-based and Bernstein-type inequality-based schemes are robust against channel errors. However, it is important to note from the histogram that compared to the M4 and M2 designs, the B2 design sometimes violates the SNR constraint. This implies that the moment inequality-based approaches are more conservative than the Bernstein-type inequality-based approach. This is consistent with our relative tightness analysis in Section \ref{sec:5}, although the outage rate $\rho$ is not within the target region.

In Fig. \ref{Fig:3} and Fig. \ref{Fig:4}, we increase the variances of the channel errors to be $0.06$. Since in this case the channel uncertainty is significantly enlarged, we can see an obvious power budget gap among different strategies. We find that M4 requires the largest power to support the robust design since the fourth-order moment inequality is the strictest. B2 requires a relatively lower power budget than M2 and M4, which implies that moment inequality-based approaches are more conservative than the Bernstein-type inequality-based approach. This is confirmed by the SNR satisfaction histogram (herein we calculate the SNR by \eqref{SNR_exact}). The results in Fig. \ref{Fig:4} again demonstrate the robustness of the moment inequality-based and Bernstein-type inequality-based approaches, as the SNR satisfaction percentages of B2, M2, and M4 all exceed our target threshold $0.9$. Actually, in this setting, the M2 and M4 designs always satisfy the SNR, while the B2 design sometimes violates the SNR constraint. This also confirms that the B2 approach is less conservative than the M2 and M4 approaches.

To further investigate conservatism, we increase the variance of channel uncertainty and compare the feasibility percentage, rank-one percentage, and randomization feasibility percentage in Tables \ref{tab:feasibility06} and \ref{tab:feasibility08}. In Table \ref{tab:feasibility06}, we have $\rho=0.1$ and $\epsilon^2=\eta^2=0.06$; in Table \ref{tab:feasibility08}, we have 
have $\rho=0.1$ and $\epsilon^2=\eta^2=0.08$. Herein, ``Feasibility\%" denotes the feasibility of the design problems \eqref{main2}, \eqref{main2_quad} and \eqref{main3}; ``Rank $k\%$" denotes the percentage of rank-$k$ solutions of the original design problem; ``Feasibility of Rand. in Rank $k$ Cases $\%$" denotes the rate of feasibility of the randomization algorithm for rank-$k$ solutions, where we set the number of randomizations to be $1000$. Note that ``NaN"  in the table indicates that the percentage of a certain rank is $0$, and thus there is no randomization of cases for this rank. The data in both tables reveal the same fact, i.e., M2 is less conservative than M4, while B2 is less conservative than M2 and M4.  This is consistent with our analytical results in Section \ref{sec:5},  although the outage rate $\rho$ for these cases is not within the target region in Theorem \ref{thm:2}.

\subsection{Comparison between M4 and M2 Approaches under Mismatched Noise and Perturbations}
To demonstrate that the M4 design is more robust than the M2 one, we set up three types of experiments. The first one follows the same setting as that in Fig. \ref{Fig:3}, where we have $\sigma_\ell^2 = \sigma_v^2=0.25$, $\epsilon^2 = \eta^2 = 0.002$, $\rho = 0.1$, and $\gamma =18$dB. We generate $1000$ and $10000$ channel realizations, respectively. For each channel realization, we solve M4 and M2 respectively to obtain the AF weights.  
	For the 1000 (resp. 10000) channel realizations generated, 861 (resp. 8725) of them are feasible for both the original design problems \eqref{main2} and \eqref{main2_quad}, and their optimal solutions are rank-one. Then, for each feasible channel realization, we generate $10000$ perturbation realizations to calculate the actual SNR (we calculate the SNR by \eqref{SNR_exact}) satisfaction rate $\hat\rho$ under M4 and M2 AF weights, where $\hat\rho$ is calculated as the ratio of the SNR-satisfying cases and the number of perturbation realizations (10000). In Table \ref{tab:different_channel}, we show the number of channel realizations that fall in different intervals of $\hat\rho$. Clearly, both M4 and M2 approaches provide a good outage rate, which is much less than $0.1$. However, we can still see that M4 is more restricted than M2, as the former has more channel realizations that exhibit higher $\hat\rho$ under the given perturbation realizations.

In the second experiment, we consider the scenario that the actual noise levels $\hat\sigma_{{\ell}}^2$ and $\hat\sigma_{{v}}^2$, or the actual perturbations $\hat\epsilon$ and $\hat\eta$, do not match the prior information in the original design problem \eqref{main0}. That is, when we solve \eqref{main0}, we set $\sigma_\ell^2 = \sigma_v^2=0.25$, $\epsilon^2 = \eta^2 = 0.002$, $\rho = 0.1$, and $\gamma =18$dB, while in practice, the noise levels change to  $\hat\sigma_\ell^2$ and $\hat\sigma_v^2$; or the perturbation levels change to $\hat\epsilon^2$ and $\hat \eta^2$.  Under the interference of the mismatch, SNR satisfaction rates and the corresponding number of channel realizations are shown in Table \ref{tab:different_perturbation}. We find that under M4, there are more channel realizations that exhibit higher $\hat\rho$ under the given mismatched realizations, which means that M4 is more robust than M2.

To further corroborate the robustness of M4, in Table \ref{tab:critical_point}, we investigate the critical point at which the mismatch of the noise or the perturbation causes the outage. To proceed,  we set $\rho = 0.1, \gamma =18$dB and solve the M4 and M2 design problems. Among $1000$ channel realizations, there are in total $861$ feasible cases for both M4 and M2. From the table, we can clearly see that as the mismatch increases, when $\hat\sigma_\ell^2=\hat\sigma_v^2=0.265$ or $\hat\epsilon^2=\hat \eta^2=0.0032$, M2 may cause outage in one out of $861$ realizations while M4 does not cause outage in all $861$ realizations. This implies that M4 is more robust than M2, as the former takes into account the exact SNR expression by keeping all higher-order perturbations.

\ifconfver
\begin{table*}[t!]
	\else
	\begin{table}[t!]
		\fi
		\centering
		\caption{Feasibility rate, rank-$k$ rate and feasibility rate of Randomization (Rand.):  $\sigma_\ell^2=0.25, \forall \ell$, $\sigma_v^2=0.25$,
			$\epsilon^2 = \eta^2 = 0.06$, $\rho = 0.1$.}
		\label{tab:feasibility06}
		\linespread{1.25} \rm \footnotesize 
		\setlength{\tabcolsep}{1.2mm}{
			\begin{tabular}{|c|c|c|c|c|c|c|c|c|c|c|c|c|c|c|c|}
				\hline
				\textbf{SNR in dB}                                                                                                          & \multicolumn{3}{c|}{$\gamma=3$}                  & \multicolumn{3}{c|}{$\gamma=6$}                  & \multicolumn{3}{c|}{$\gamma=9$}                  & \multicolumn{3}{c|}{$\gamma=12$}                 & \multicolumn{3}{c|}{$\gamma=15$}                 \\ \hline
				\textbf{Method}                                                                                                             & \textbf{M4} & \textbf{M2} & \textbf{B2} & \textbf{M4} & \textbf{M2} & \textbf{B2} & \textbf{M4} & \textbf{M2} & \textbf{B2} & \textbf{M4} & \textbf{M2} & \textbf{B2} & \textbf{M4} & \textbf{M2} & \textbf{B2} \\ \hline
				\textbf{Feasibility\%}                                                                                                      & $0.45$        & $0.50$         & $0.96$        & $0.42$        & $0.47$        & $0.92$        & $0.36$        & $0.40$         & $0.82$         & $0.28$        & $0.29$        & $0.78$        & $0.22$        & $0.24$        & $0.63$        \\ \hline
				\textbf{Rank $1$\%}                                                                                                         & $1.000$           & $1.000$           & $0.844$       & $1.000$           & $1.000$           & $0.891$       & $0.972$       & $1.000$           & $0.951$        & $1.000$           & $1.000$           & $0.961$       & $1.000$           & $1.000$           & $0.984$           \\ \hline
				\textbf{Rank $2$\%}                                                                                                         & $0$           & $0$           & $0.031$       & $0$           & $0$           & $0.022$       & $0.028$       & $0$           & $0.012$        & $0$           & $0$           & $0.026$       & $0$          & $0$           & $0.016$           \\ \hline
				\textbf{Rank $3$\%}                                                                                                       & $0$           & $0$           & $0.010$       & $0$           & $0$           & $0.065$       & $0$           & $0$           & $0.037$           & $0$           & $0$           & $0.013$       & $0$           & $0$           & $0$           \\ \hline
				\textbf{Rank $4$\%}                                                                                                        & $0$           & $0$           & $0.115$       & $0$           & $0$           & $0.022$       & $0$           & $0$           & $0$        & $0$           & $0$           & $0$           & $0$           & $0$           & $0$           \\ \hline
				\multicolumn{1}{|c|}{\textbf{\begin{tabular}[c]{@{}c@{}} Feasibility of Rand. \\ in Rank $2$ Cases\%\end{tabular}}}   & NaN         & NaN         & $0.0507$       & NaN         & NaN         & $0.0280$      & $1.0000$           & NaN         & $0.0750$      & NaN         & NaN         & $0.0345$      & NaN         & NaN         & $0.004$         \\ \hline
				\multicolumn{1}{|c|}{\textbf{\begin{tabular}[c]{@{}c@{}}Feasibility of Rand. \\ in Rank $3$ Cases\%\end{tabular}}} & NaN         & NaN         & $0.0020$      & NaN         & NaN         & $0.0005$           & NaN         & NaN         & $0$         & NaN         & NaN         & $0$      & NaN         & NaN         & NaN         \\ \hline
				\multicolumn{1}{|c|}{\textbf{\begin{tabular}[c]{@{}c@{}} Feasibility of Rand.\\ in Rank $4$ Cases\%\end{tabular}}}   & NaN         & NaN         & $9 \times 10^{-5}$      & NaN         & NaN         & $0$      & NaN         & NaN         &  NaN           & NaN         & NaN         & NaN         & NaN         & NaN         & NaN         \\ \hline
		\end{tabular}}
		\ifconfver
	\end{table*}
	\else
\end{table}
\fi

\ifconfver
\begin{table*}[t!]
	\else
	\begin{table}[t!]
		\fi
		\centering		
		\caption{Feasibility rate, rank-$k$ rate and feasibility rate of Randomization (Rand.): $\sigma_\ell^2=0.25, \forall \ell$, $\sigma_v^2=0.25$,
			$\epsilon^2 = \eta^2 = 0.08$, $\rho = 0.1$.}
		\label{tab:feasibility08}
		\linespread{1.25} \rm \footnotesize 
		\setlength{\tabcolsep}{1.2mm}{
			\begin{tabular}{|c|c|c|c|c|c|c|c|c|c|c|c|c|c|c|c|}
				\hline
				\textbf{SNR in dB}                                                                                                          & \multicolumn{3}{c|}{$\gamma=3$}                  & \multicolumn{3}{c|}{$\gamma=6$}                  & \multicolumn{3}{c|}{$\gamma=9$}                  & \multicolumn{3}{c|}{$\gamma=12$}                 & \multicolumn{3}{c|}{$\gamma=15$}                 \\ \hline
				\textbf{Method}                                                                                                             & \textbf{M4} & \textbf{M2} & \textbf{B2} & \textbf{M4} & \textbf{M2} & \textbf{B2} & \textbf{M4} & \textbf{M2} & \textbf{B2} & \textbf{M4} & \textbf{M2} & \textbf{B2} & \textbf{M4} & \textbf{M2} & \textbf{B2} \\ \hline
				\textbf{Feasibility\%}                                                                                                      & $0.17$        & $0.19$        & $0.91$        & $0.16$        & $0.19$        & $0.81$        & $0.16$        & $0.16$        & $0.72$        & $0.10$         & $0.13$        & $0.70$        & $0.06$        & $0.06$        & $0.44$         \\ \hline
				\textbf{Rank $1$\%}                                                                                                         & $1.000$           & $1.000$           & $0.714$       & $1.000$           & $1.000$           & $0.840$       & $1.00$           & $1.00$           & $0.833$       & $1.000$           & $1.000$           & $0.871$       & $1.000$           & $1.000$           & $0.864$         \\ \hline
				\textbf{Rank $2$\%}                                                                                                         & $0$           & $0$           & $0.077$       & $0$           & $0$           & $0.074$           & $0$           & $0$           & $0.115$           & $0$           & $0$           & $0.086$       & $0$           & $0$           & $0.114$         \\ \hline
				\textbf{Rank $3$\%}                                                                                                       & $0$           & $0$           & $0.033$       & $0$           & $0$           & $0.025$       & $0$           & $0$           & $0.039$       & $0$           & $0$           & $0.029$       & $0$           & $0$           & $0$         \\ \hline
				\textbf{Rank $4$\%}                                                                                                        & $0$           & $0$           & $0.176$       & $0$           & $0$           & $0.061$       & $0$           & $0$           & $0.013$       & $0$           & $0$           & $0.014$           & $0$           & $0$           & $0.022$         \\ \hline
				\multicolumn{1}{|c|}{\textbf{\begin{tabular}[c]{@{}c@{}}Feasibility of Rand. \\ in Rank $2$ Cases\%\end{tabular}}}   & NaN         & NaN         & $0.0733$       & NaN         & NaN         & $0.0583$         & NaN         & NaN         & $0.0453$         & NaN         & NaN         & $0.0187$           & NaN         & NaN         & $0.004$        \\ \hline
				\multicolumn{1}{|c|}{\textbf{\begin{tabular}[c]{@{}c@{}}Feasibility of Rand. \\ in Rank $3$ Cases\%\end{tabular}}} & NaN         & NaN         & $0.0097$       & NaN         & NaN         & $0.0020$      & NaN         & NaN         & $0.0033$      & NaN         & NaN         & $0.0010$       & NaN         & NaN         & NaN      \\ \hline
				\multicolumn{1}{|c|}{\textbf{\begin{tabular}[c]{@{}c@{}}Feasibility of Rand.  \\ in Rank $4$ Cases\%\end{tabular}}}   & NaN         & NaN         & $0.0003$    & NaN         & NaN         & $0.0008$  & NaN         & NaN         & $0.0020$           & NaN         & NaN         & $0$         & NaN         & NaN         & $0.2320$           \\ \hline
		\end{tabular}}
		\ifconfver
	\end{table*}
	\else
\end{table}
\fi

\ifconfver
\begin{table*}[t!]
	\else
	\begin{table}[t!]
		\fi
		\centering	
		\caption{SNR Satisfaction rates for different channel realizations: $\sigma_\ell^2=0.25, \forall \ell$, $\sigma_v^2=0.25$,
			$\epsilon^2 = \eta^2 = 0.002$, $\rho = 0.1$; $\gamma=18$dB.}
		\label{tab:different_channel}
		\linespread{1.25} \rm \footnotesize 
		\setlength{\tabcolsep}{3.5mm}{
			\begin{tabular}{|c|c|c|c|c|}
				\hline
				\textbf{Number of Channel Realizations}                 & \multicolumn{2}{c|}{$1000$} & \multicolumn{2}{c|}{$10000$}  \\ \hline
				\diagbox[height=4em,width=16em,trim=lr]   {\textbf{ SNR Satisfaction Rate} ($\hat{\rho}$)} {\textbf{Method}}  & \textbf{M4}          & \textbf{M2}          & \textbf{M4}           & \textbf{M2}                   \\ \hline
				$1.000 \ge \hat{\rho} > 0.999 $                               & $852$         & $850$         & $8616$          & $8597$            \\ \hline
				$0.999 \ge \hat{\rho} > 0.998 $                                  & $7$           & $8$           & $86$          & $95$               \\ \hline
				$0.998 \ge \hat{\rho} > 0.997 $                                  & $2$           & $2$           & $8$ & $7$                  \\ \hline
				$0.997 \ge \hat{\rho} > 0.996 $                                   & $0$           & $0$           & $3$            & $7$                 \\ \hline
				$0.996 \ge \hat{\rho} > 0.995 $                                  & $0$           & $1$           & $1$            & $3$                  \\ \hline
				$0.995 \ge \hat{\rho} > 0.994 $                                   & $0$           & $0$           & $2$            & $3$                 \\ \hline
				$0.994 \ge \hat{\rho} > 0.993 $                                  & $0$           & $0$           & $1$            & $3$                  \\ \hline
				$0.993 \ge \hat{\rho} > 0.992 $                                  & $0$           & $0$           & $1$            & $2$                 \\ \hline
				$0.992 \ge \hat{\rho} > 0.991 $                                & $0$           & $0$           & $0$            & $1$             \\ \hline
				$0.991 \ge \hat{\rho} > 0.990$                                   & $0$           & $0$           & $1$            & $0$         \\ \hline
				$0.990 \ge \hat{\rho} \ge 0$                                      & $0$          & $0$           & $6$            & $7$               \\ \hline
				\textbf{Number of Feasible Channel Realizations}   & \multicolumn{2}{c|}{$861$} & \multicolumn{2}{c|}{$8725$}  \\ \hline
		\end{tabular}}
		\ifconfver
	\end{table*}
	\else
\end{table}
\fi

\ifconfver
\begin{table*}[t!]
	\else
	\begin{table}[t!]
		\fi
		\centering	
		\caption{SNR Satisfaction rates under noise and perturbation mismatches: $\rho = 0.1$, $\gamma=18$dB, channel realization $= 1000$.}
		\label{tab:different_perturbation}
		\linespread{1.25} \rm \footnotesize 
		\setlength{\tabcolsep}{1.5mm}{
			\begin{tabular}{|c|c|c|c|c|c|c|c|c|c|c|c|c|c|c|}
				\hline
				
				\textbf{~~} & \multicolumn{2}{c|}{\textbf{\begin{tabular}[c]{@{}c@{}}Original  \\ (No Mismatch)\end{tabular}}} & \multicolumn{6}{c|}{\textbf{Noise Mismatch}}                                                  & \multicolumn{6}{c|}{\textbf{ Perturbation Mismatch}}                                               \\ \hline
				\textbf{Channel Error ($\hat\eta^{2} =\hat\epsilon^{2} $)}     & \multicolumn{8}{c|}{0.002}                                                                                           & \multicolumn{2}{c|}{0.0022} & \multicolumn{2}{c|}{0.0024} & \multicolumn{2}{c|}{0.0028} \\ \hline
				\textbf{Noise} ($\hat\sigma_\ell^2 = \hat\sigma_v^2 $)             & \multicolumn{2}{c|}{0.25}     & \multicolumn{2}{c|}{0.252} & \multicolumn{2}{c|}{0.256} & \multicolumn{2}{c|}{0.258} & \multicolumn{6}{c|}{0.25}                                                               \\ \hline
				\diagbox[height=4em,width=16em,trim=lr]  { \textbf{SNR Satisfaction Rate ($\hat{\rho}$)}}{\textbf{Method}}           & \textbf{M4}             & \textbf{M2}    & \textbf{M4}                       & \textbf{M2}          & \textbf{M4}            & \textbf{M2}         & \textbf{M4}            & \textbf{M2}           & \textbf{M4}            & \textbf{M2}            & \textbf{M4}            & \textbf{M2}          & \textbf{M4}            & \textbf{M2}           \\ \hline
				$1.000 \ge \hat{\rho} > 0.999 $                  & $852$           & $850$           & $703$          & $694$         & $65$           & $59$          & $31$           & $26$          & $306$          & $294$          & $1$            & $0$            & $1$            & $0$            \\ \hline
				$0.999 \ge \hat{\rho} > 0.998 $             & $7$             & $8$             & $156$          & $165$         & $285$          & $283 $        & $85$           & $80$          & $538$          & $550$          & $130$          & $122$          & $0$            & $1$            \\ \hline
				$0.998 \ge \hat{\rho} > 0.997 $              & $2$             & $2$             & $1$            & $1$           & $343$          & $346$         & $163$          & $158$         & $12 $          & $12$           & $508$          & $503$          & $0$           & $0$            \\ \hline
				$0.997 \ge \hat{\rho} > 0.996 $             & $0$             & $0$             & $0$            & $0$           & $127$          & $131$         & $203$          & $206$         & $2 $           & $1$            & $206$          & $215$          & $2$            & $1$            \\ \hline
				$0.996 \ge \hat{\rho} > 0.995 $              & $0$             & $1$             & $1$            & $0$           & $33$           & $33$          & $182$          & $181$         & $1$            & $1$            & $11$           & $13$           & $7$            & $6$            \\ \hline
				$0.995 \ge \hat{\rho} > 0.994 $              & $0$             & $0$             & $0$            & $1$           & $4$            & $5$           & $111$          & $119$         & $1$            &$ 1$            & $0$            & $2$            & $19$           & $14$           \\ \hline
				$0.994 \ge \hat{\rho} > 0.993 $  $$            &$ 0$             & $0$             & $0$            & $0$           & $2$            & $$2$$           & $36$          & $37$         & $0$            & $1$            & $1$            & $0$            & $74$           & $69$           \\ \hline
				$0.993 \ge \hat{\rho} > 0.992 $              & $0 $            & $0 $            &$ 0$            & $0$           & $1$            & $1$          & $25$           & $27 $         & $0$            &$ 0 $           & $1 $           &$ 1$            & $182$          & $183$          \\ \hline
				$0.992 \ge \hat{\rho} > 0.991 $              & $0 $            & $0$             & $0$            & $0$           & $1$            & $1$           & $15$           & $17$          & $0$            & $0$            & $0$            & $2$            & $293$          & $292$          \\ \hline
				$0.991 \ge \hat{\rho} > 0.990 $              & $0 $            & $0$             & $0$            & $0$           & $0$            & $0$           & $4$            & $4$          & $0$            & $0$            & $0$            & $0$            & $195$          & $200$          \\ \hline
				$0.990 \ge \hat{\rho} \ge 0 $              & $0$             & $0$             & $0$            & $0$           & $0$            & $0$           & $6$            & $6$           & $1$            & $1$            & $3$            & $3$            & $88$           & $95$           \\ \hline
				\textbf{Number of Feasible Channel Realizations} & \multicolumn{14}{c|}{$861$}  \\ \hline
				\end{tabular}}
			\ifconfver
			\end{table*}
			\else
			\end{table}
			\fi

			\ifconfver
			\begin{table*}[t!]
				\else
				\begin{table}[t!]
					\fi
					\centering	
					\caption{Noise mismatch and perturbation mismatch cause the outage: $\rho = 0.1$,  $\gamma=18$dB, channel realization $= 1000$.}
					\label{tab:critical_point}
					\linespread{1.25} \rm \footnotesize 
					\setlength{\tabcolsep}{3.5mm}{
						\begin{tabular}{|c|c|c|c|c|c|c|}
							\hline
							\textbf{~} & \multicolumn{2}{c|}{\tabincell{c}{\textbf{Original}\\ \textbf{(No Mismatch)}}} & \multicolumn{2}{c|}{\textbf{Noise Mismatch}} & \multicolumn{2}{c|}{\textbf{Perturbation Mismatch}} \\ \hline
							\textbf{Noise ($\hat\sigma_\ell^2 = \hat\sigma_v^2$)}              & \multicolumn{2}{c|}{0.25}   & \multicolumn{2}{c|}{0.265}          & \multicolumn{2}{c|}{0.25}                 \\ \hline
							\textbf{Channel Error ($\hat\eta^{2} =\hat\epsilon^{2} $)}      & \multicolumn{4}{c|}{0.002}                                        & \multicolumn{2}{c|}{0.0032}               \\ \hline
							\diagbox[height=4em,width=16em,trim=rl]  {  \textbf{SNR Satisfaction Rate ($\hat{\rho}$)}} { \textbf{Method}}            & M4           & M2           & M4               & M2               & M4                  & M2                  \\ \hline
							$1.000 \ge \hat{\rho} \ge 0.900 $                  & $861$          & $861$          & $861 $             & $860$              & $861$                 & $860$                 \\ \hline
							$0.900 > \hat{\rho} \ge 0 $ (Outage)                  & $0$            & $0$            & $0$                & $1$                & $0$                   & $1$                   \\ \hline
							\textbf{Number of Feasible Channel Realizations} & \multicolumn{6}{c|}{$861$}  \\ \hline
					\end{tabular}}
					\ifconfver
				\end{table*}
				\else
			\end{table}
			\fi

\subsection{Relative Tightness Verification for M2 and B2 Approaches}
Lastly, we set the outage probability $\rho=0.00044$ to verify the conclusion in  Theorem  \ref{thm:2}, i.e., ``\emph{for any $\rho \in (\exp(-8),0.00045)$, the moment inequality-based safe approximation is always more conservative than the Bernstein-type inequality-based safe approximation}."
We also set $\sigma_\ell^2 = \sigma_v^2=0.25$ and $\epsilon^2=\eta^2=0.0005$. The results are presented in Fig. \ref{Fig:5}, Fig. \ref{Fig:6}, and Table \ref{tab:feasibility00044}. We find that M2 requires more power than B2, which implies that moment inequality-based approaches are more conservative than the Bernstein-type inequality-based approach. The SNR satisfaction percentages of both B2 and M2 exceed our target threshold $1-0.00044$ (herein we calculate the SNR by \eqref{SNR_B2}), as M2 always provides  $100\%$ SNR satisfaction while B2 provides a slightly lower SNR satisfaction percentage. It is interesting to see that B2 requires a similar level of power as the non-robust case, while its SNR satisfaction is significantly better than the latter. This shows the necessity of the robust design. Table \ref{tab:feasibility00044} is even more interesting, as the B2 design seems always feasible while the M2 design is rarely feasible. Note that the rank-one feasibility is $100\%$ for both B2 and M2. We can clearly see and conclude that the moment inequality-based approach is more conservative than the Bernstein-type inequality-based approach from this experiment, which is consistent with our Theorem \ref{thm:2}.

\begin{figure}[t!]
	\centering
	\centerline{\includegraphics[width = 8.5cm]{./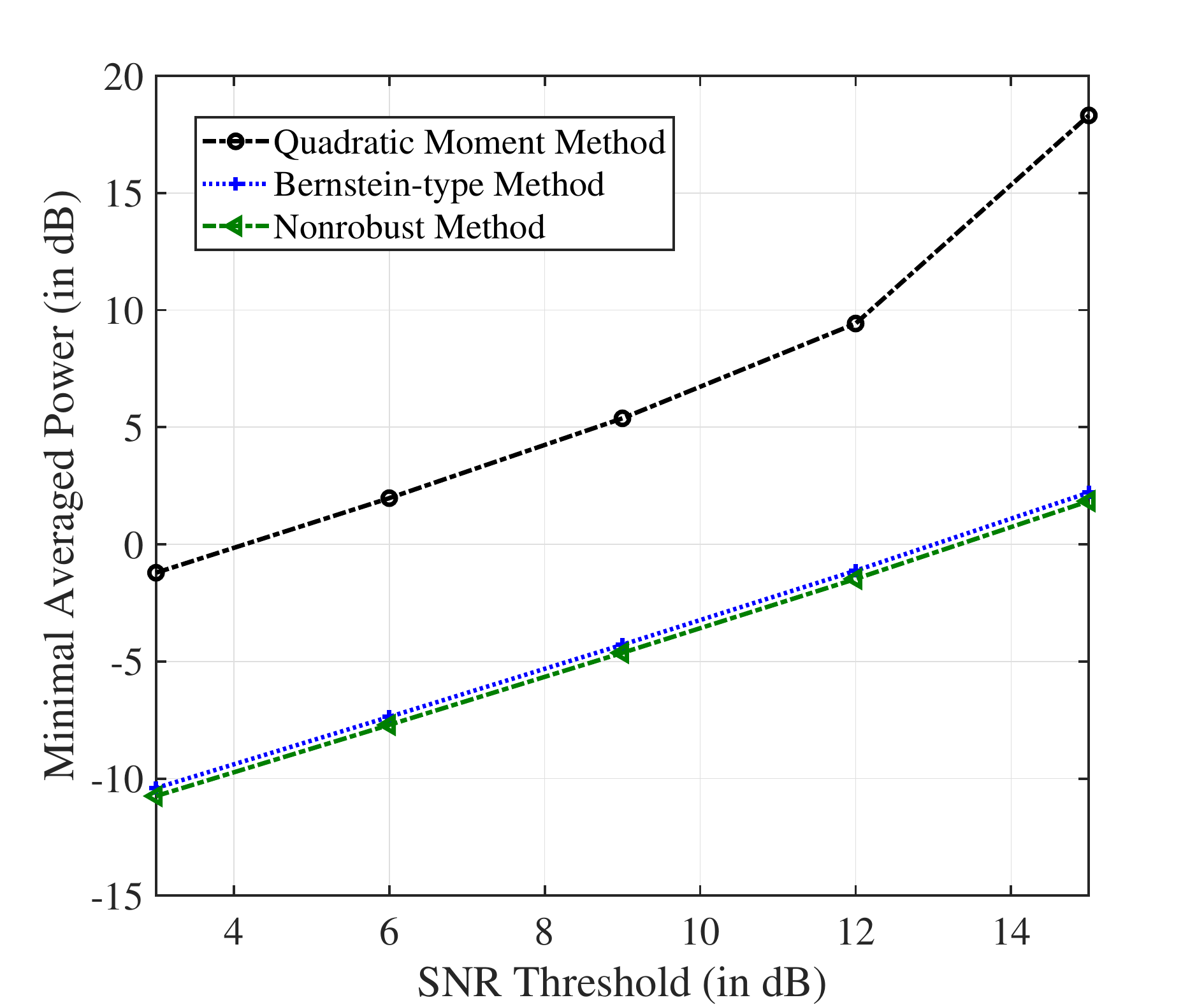}}
	\caption{Minimum Power required versus the SNR threshold. For $\epsilon^2=\eta^2=0.0005$; SNR outage percentage $\rho = 0.00044$.}
	\label{Fig:5}
\end{figure}
\begin{figure}[t!]
	\centering
	\centerline{\includegraphics[width = 8.2cm]{./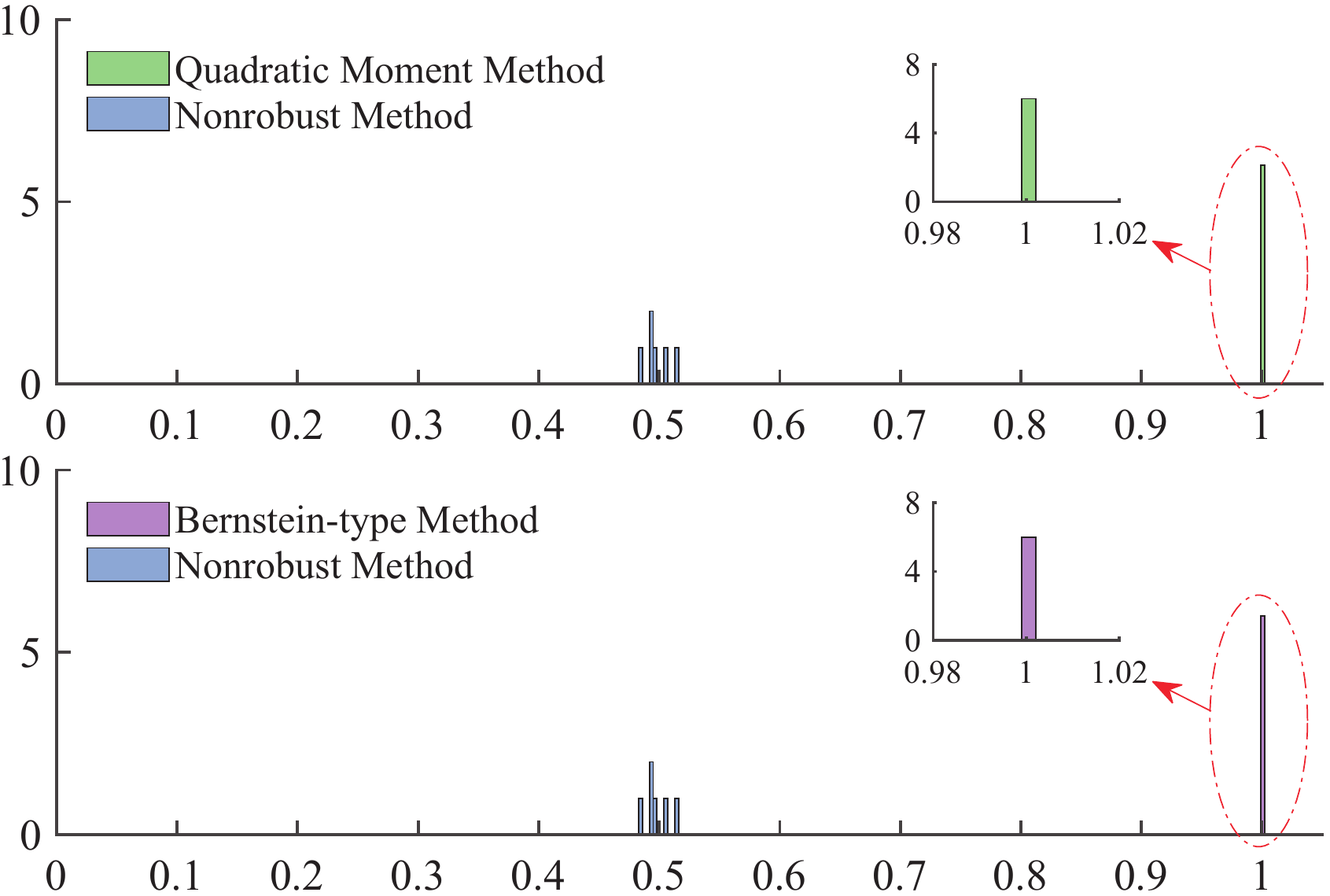}}
	\caption{Histogram of the SNR satisfaction probability. For $\epsilon^2=\eta^2=0.0005$; SNR outage percentage $\rho = 0.00044$; SNR threshold $\gamma=12$dB.}
	\label{Fig:6}
\end{figure}

\ifconfver
\begin{table*}[t!]
	\else
	\begin{table}[t!]
		\fi
		\centering		
		\caption{Feasibility rate and the rank-one rate respectively for $\sigma_\ell^2=0.25, \forall \ell$, $\sigma_v^2=0.25$,
			$\epsilon^2 = \eta^2 = 0.0005$; $\rho = 0.00044$.}
		\label{tab:feasibility00044}
		\linespread{1.25} \rm \footnotesize 
		\setlength{\tabcolsep}{4.5mm}{
			\begin{tabular}{|c|c|c|c|c|c|c|c|c|c|c|c|c|c|c|c|}
				\hline
				\textbf{SNR in dB}                                                                                                          & \multicolumn{2}{c|}{$\gamma=3$}                  & \multicolumn{2}{c|}{$\gamma=6$}                  & \multicolumn{2}{c|}{$\gamma=9$}                  & \multicolumn{2}{c|}{$\gamma=12$}                 & \multicolumn{2}{c|}{$\gamma=15$}                 \\ \hline
				\textbf{Method}                                                                                                             &\textbf{M2} & \textbf{B2}  & \textbf{M2} & \textbf{B2} &  \textbf{M2} & \textbf{B2} & \textbf{M2} & \textbf{B2}  & \textbf{M2} & \textbf{B2} \\ \hline
				\textbf{Feasibility\%}                                                                                                      & $0.07$        & $1.00$        & $0.06$        & $1.00$        & $0.06$        & $1.00$        & $0.06$        & $1.00$        & $0.04$        & $1.00$                  \\ \hline
				\textbf{Rank One\%}                                                                                                         & $1.00$           & $1.00$           & $1.00$       & $1.00$           & $1.00$           & $1.00$       & $1.00$           & $1.00$           & $1.00$       & $1.00$                   \\ \hline
		\end{tabular}}
		\ifconfver
	\end{table*}
	\else
\end{table}
\fi

\section{Conclusions} \label{sec:con}
In this paper we studied the robust design problem for two-hop one-way relay beamforming. Specifically, we considered the scenario where both the transmitter-to-relay and relay-to-receiver links are subject to errors. This scenario is difficult and seldom studied in the literature as it involves chance constraints with quartic perturbations. We provided different reformulations of the chance-constrained robust design problem and further analyzed the relative tightness of different reformulations.
Numerical results further confirmed the superiority of the proposed robust design. The quartic perturbation-based outage-constrained robust design
is indeed more conservative. Nevertheless, by taking into account the higher-order perturbations, the resulting design is more robust against mismatch of prior distributional information. The SNR satisfaction rate and rank feasibility tables verified the tightness results. In the future, many transceiver pairs could be considered as a non-trivial extension of this work.

\appendix

\section{Appendix}

\subsection{Proof of Theorem~\ref{thm:1}}
\label{appendix:1}
Given $\bar{f}({\bm x},{\bm \xi}) = f({\bm x},{\bm \xi}) + a_0(\bm x)$, by assumption, for each ${\bm x} \in \C^n$ the function ${\bm x} \mapsto \bar{f}({\bm x},{\bm \xi})$
is affine in ${\bm x} \in \C^n$. This implies that ${\bm x} \mapsto \bar{f}^2({\bm x},{\bm \xi})$ is a
non-negative homogeneous quadratic polynomial in ${\bm x} \in \C^n$. 
This establishes (a) in Theorem~\ref{thm:1}.

To prove Theorem~\ref{thm:1}(b), we need the following lemma.

\begin{lemma} \label{thm:mom-bd}
	(cf.~\cite[Theorem 5.10]{janson1997gaussian}) For all $q\ge2$,
	$$ \E{|\bar{f}({\bm x},{\bm \xi})|^q}^{1/q} \le (q-1)^2\E{|\bar{f}({\bm x},{\bm \xi})|^2}^{1/2}. $$
\end{lemma}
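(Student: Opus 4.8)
The plan is to recognize the claimed inequality as the degree-four instance of the Gaussian hypercontractivity bound and to verify that $\bar{f}(\bm x,\cdot)$ meets its hypotheses. For fixed $\bm x$, the map $\bm\xi \mapsto \bar{f}(\bm x,\bm\xi) = f(\bm x,\bm\xi) + a_0(\bm x)$ is a polynomial of degree at most four in the independent standard Gaussians $\xi_1,\ldots,\xi_m$: the quartic monomials $\sum \xi_{\ell_1}\xi_{\ell_2}\xi_{\ell_3}\xi_{\ell_4}\,a_{\ell_1\ell_2\ell_3\ell_4}(\bm x)$ are the highest-order terms, and adding $a_0(\bm x)$ merely cancels the constant $-a_0(\bm x)$. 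Hence $\bar{f}(\bm x,\cdot)$ lies in the direct sum $\bigoplus_{k=0}^{4}\mathcal{H}_k$ of the Wiener (Hermite) chaoses of degrees $0$ through $4$. This is exactly the setting of \cite[Theorem 5.10]{janson1997gaussian} with chaos degree $d=4$, and the asserted bound is its conclusion, the exponent being $(q-1)^{d/2}=(q-1)^2$.

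To keep the argument self-contained I would recall the derivation through Nelson's hypercontractivity theorem. Write $X = \bar{f}(\bm x,\bm\xi)$ with chaos decomposition $X = \sum_{k=0}^{4} X_k$, $X_k \in \mathcal{H}_k$, and let $T_\rho$ denote the Ornstein--Uhlenbeck (Mehler) semigroup, which acts on $\mathcal{H}_k$ as multiplication by $\rho^k$ for $0 \le \rho \le 1$. Nelson's theorem states that $T_\rho : L^2 \to L^q$ is a contraction whenever $\rho^2 \le 1/(q-1)$; I would take $\rho = (q-1)^{-1/2}$. The key step is to run the semigroup backwards on this finite chaos: define $Y = \sum_{k=0}^{4} \rho^{-k} X_k$, so that $T_\rho Y = X$. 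Then hypercontractivity gives $\|X\|_q = \|T_\rho Y\|_q \le \|Y\|_2$, while orthogonality of distinct chaoses yields $\|Y\|_2^2 = \sum_{k=0}^{4} \rho^{-2k}\|X_k\|_2^2 \le \rho^{-8}\sum_{k=0}^{4}\|X_k\|_2^2 = \rho^{-8}\|X\|_2^2$, where the inequality uses $\rho \le 1$ so that $\rho^{-2k} \le \rho^{-8}$ for $k \le 4$. Combining the two estimates gives $\|X\|_q \le \rho^{-4}\|X\|_2 = (q-1)^2\|X\|_2$, which is the assertion.

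I expect the remaining points to be bookkeeping rather than conceptual. First, one should confirm that $\bar{f}(\bm x,\cdot)$ is real-valued, which follows from part (a) since $\bar{f}(\bm x,\bm\xi)^2 = \bm v^H(\bm x)\bm U(\bm\xi)\bm v(\bm x) \ge 0$ forces $\bar{f}$ real; thus $\E{|\bar{f}|^q}$ is the $L^q$ norm of a genuine real Gaussian chaos, to which the real version of \cite[Theorem 5.10]{janson1997gaussian} applies directly. Second, one should note that although $\bar{f}$ carries no literal constant term, its degree-zero component $X_0 = \E{\bar{f}}$ may be nonzero (it collects the means of the even quadratic and quartic monomials), but this does not affect the degree-$\le 4$ containment that drives the estimate. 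The only genuine care needed is invoking Nelson's theorem with the correct normalization $\rho = (q-1)^{-1/2}$; everything else reduces to the orthogonality computation above.
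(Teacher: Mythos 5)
Your proof is correct, but note that the paper itself does not prove this lemma at all---it is imported as a black box from Theorem 5.10 of Janson's \emph{Gaussian Hilbert Spaces} (the appendix only combines it with Markov's inequality to get Theorem~\ref{thm:1}(b)). What you have written out is essentially the standard proof of that cited theorem, so your route differs from the paper's only in being self-contained: you decompose $\bar f({\bm x},\cdot)$, a polynomial of degree at most four in $\xi_1,\dots,\xi_m$, into Wiener chaoses $X=\sum_{k=0}^{4}X_k$, invert the Ornstein--Uhlenbeck semigroup at $\rho=(q-1)^{-1/2}$ by setting $Y=\sum_{k=0}^{4}\rho^{-k}X_k$ so that $T_\rho Y=X$, apply Nelson's $L^2\to L^q$ hypercontractivity (the condition $\rho^2\le 1/(q-1)$ holds with equality), and bound $\|Y\|_2\le\rho^{-4}\|X\|_2$ by orthogonality of distinct chaoses, using $\rho^{-2k}\le\rho^{-8}$ for $k\le 4$. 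All steps check out: the finite chaos makes $Y$ well defined, the case $q=2$ is trivial, and your two bookkeeping remarks are exactly right---the degree-zero component $X_0=\E{\bar f}$ need not vanish (the cancellation of $-a_0({\bm x})$ kills the literal constant term, not the mean), and $\bar f$ is real-valued, which follows either from your observation that $\bar f^2={\bm v}^H({\bm x}){\bm U}({\bm \xi}){\bm v}({\bm x})\ge 0$ forces a complex number with nonnegative real square to be real, or more directly from the theorem's hypothesis that $f$ is real together with realness of $a_0({\bm x})$ (implicit in the constraint \eqref{SOC constrain}). What your argument buys over the paper's bare citation is transparency about where the exponent comes from: it is $(q-1)^{d/2}$ with chaos degree $d=4$, which makes clear how the constant $c(\rho)$ in Theorem~\ref{thm:1} would have to change for perturbation polynomials of a different degree.
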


To prove Theorem~\ref{thm:1}(b), since
$ \bar{f}({\bm x},{\bm \xi})^{2}={\bm v}^T({\bm x}){\bm U}({\bm \xi}){\bm v}({\bm x}),$
it follows that
 \begin{align*}
 \E{|\bar{f}({\bm x},{\bm \xi})|^2} \ge &
 \E{\bar{f}({\bm x},{\bm \xi})^2}\\
 =& {\bm v}^T({\bm x}) \E{{\bm U}({\bm \xi})} {\bm v}({\bm x}) \\=& {\bm v}^T({\bm x}) {\bm U} {\bm v}({\bm x}),
  \end{align*}
where ${\bm U} = \E{{\bm U}({\bm \xi})}$ is a Hermitian positive semidefinite matrix, which can be computed explicitly, as each entry of ${\bm U}({\bm \xi})$ involves only the expectation of a certain product of standard Gaussian random variables. 

%
%
For ease of notation, we let $\bar{q} = q(\rho)$, where $q(\rho)$ is defined in \eqref{q(epsilon)}. By Lemma \ref{thm:mom-bd} and Markov's inequality, for any $\bar q\ge2$, we have
\begin{eqnarray*}
		\Pr(|\bar{f}({\bm x},{\bm \xi})| \ge t) &\le& \frac{\E{|\bar{f}({\bm x},{\bm \xi})|^{\bar q}}}{t^{\bar q}} \\
		\noalign{\medskip}
		&\le& \frac{({\bar q}-1)^{2{\bar q}} \cdot \E{|\bar{f}({\bm x},{\bm \xi})|^2}^{{\bar q}/2}}{t^{\bar q}} \\
		\noalign{\medskip}
		&\le& \left( \frac{({\bar q}-1)^2 \cdot \|{\bm U}^{1/2} {\bm v}({\bm x})\|_2}{t} \right)^{\bar q}.
\end{eqnarray*}
Thus, whenever
	$
	t \ge  c(\rho) \|{\bm U}^{1/2} {\bm v}({\bm x})\|_2
	$,
	where $c(\rho)$ is defined in \eqref{c(epsilon)},
	we have $\Pr( |\bar{f}({\bm x},{\bm \xi})| \ge t ) \le \rho$. Whenever the second-order cone constraint in \eqref{SOC constrain}
	holds, we have
	\begin{align*}
	\Pr( f({\bm x},{\bm \xi}) \ge 0 ) = &\Pr( \bar{f}({\bm x},{\bm \xi}) \ge a_0(\bm x) ) \\
	\le& \Pr( |\bar{f}({\bm x},{\bm \xi})| \ge a_0(\bm x) ) \\
	\le &\Pr(|\bar{f}({\bm x},{\bm \xi})| \ge t)\\
	\le &\rho.
	\end{align*}
	This implies that the (complex) second-order cone constraint (\ref{SOC constrain}) is a safe tractable approximation of (\ref{chance_Q}), as desired.

\bibliographystyle{IEEEtran}
\bibliography{chance_relay_ref,relay_poly_iccasp_2016_ref}
\end{document}